\newtheorem{rem}{\textbf{Remark}}
\newtheorem{lemma}{\textbf{Lemma}}
\newtheorem{prop}{\textbf{Proposition}}
\newtheorem{thm}{\textbf{Theorem}}
\newtheorem{ass}{\textbf{Assumption}}
\newcommand{\proj}{\mathbf{Proj}}
\newcommand{\gap}{\vspace{0.1cm}}
\DeclareMathAlphabet{\pazocal}{OMS}{zplm}{m}{n}
\newcommand{\newsec}[1]{\gap \noindent {\bf #1}}
\title{\LARGE \bf
Duality between density function and value function with applications in constrained optimal control and Markov Decision Process
}
\author{Yuxiao Chen and Aaron D. Ames 
\thanks{Yuxiao Chen Aaron Ames are with the Department of Mechanical and Civil Engineering, California Institute of Technology,
         Pasadena, CA, 91106, USA. Emails:
        {\tt\small \{chenyx, ames\}@caltech.edu}}
}
\begin{document}

\maketitle
\thispagestyle{empty}
\pagestyle{empty}

\begin{abstract}

Density function describes the density of states in the state space of a dynamic system or a Markov Decision Process (MDP). Its evolution follows the Liouville equation. We show that the density function is the dual of the value function in the optimal control problems. By utilizing the duality, constraints that are hard to enforce in the primal value function optimization such as safety constraints in robot navigation, traffic capacity constraints in traffic flow control can be posed on the density function, and the constrained optimal control problem can be solved with a primal-dual algorithm that alternates between the primal and dual optimization. The primal optimization follows the standard optimal control algorithm with a perturbation term generated by the density constraint, and the dual problem solves the Liouville equation to get the density function under a fixed control strategy and updates the perturbation term. Moreover, the proposed method can be extended to the case with exogenous disturbance, and guarantee robust safety under the worst-case disturbance. We apply the proposed method to three examples, a robot navigation problem and a traffic control problem in sim, and a segway control problem with experiment.

\end{abstract}
\section{Introduction}
The problem of optimal control is one of the most well-studied problems in control. Due to Bellman's principle of optimality \cite{bellman2013dynamic}, dynamic programming has been the standard tool for solving optimal control problems, both in continuous state space and discrete state space, such as in the case of a Markov Decision Process (MDP). The key concept in dynamic programming is called a value function, which is the optimal cost-to-go at a given state and time, and the optimal control strategy is derived from the value function. In particular, the value function in a continuous state space follows the Hamilton Jacobi Bellman (HJB) partial Differential Equation (PDE). However, it is difficult to cast some constrained optimal control problems as a constrained value function optimization, such as problems with safety constraint or convergence rate constraint. We show that the density function, first introduced to the control community by Rantzer \cite{rantzer2001dual,prajna2004nonlinear}, is the dual of the value function in many different optimal control formulations, and some constrained optimal control problems can be conveniently written as an optimization over the density function. The idea of duality is not a new one, such as the concept of co-state in classic optimal control \cite{pontryagin2018mathematical}, the occupation measure approach in \cite{lasserre2008nonlinear,majumdar2014convex,zhao2017optimal,zhao2017control}. The costate can be viewed as a special case of the density function in which the density is simply a delta function. The setup for occupation measure does not explicitly use the value function, but rather make the control input as part of the infinite dimensional linear programming, which is approximated by finite dimensional semidefinite programming; while the approach in this paper explicitly uses the fact that the control strategy is determined by the value function, and turn the computation to the HJB PDE of the primal problem.

In \cite{rantzer2001dual}, density function was proposed as the dual to the Lyapunov function to prove stability of nonlinear systems. Since the density function follows the Liouville equation, which is a partial differential equation and is hard to encode, the computation method for the density function has been limited to analytical method (propose one and validate by hand) and Sum of Squares programming \cite{prajna2004nonlinear}. Liouville equation is also directly used to formulate optimal control problem and analytical solution can be found for linear systems, as shown in \cite{brockett2007optimal}. We show in this paper that instead of viewing the density function as a certificate of stability, it actually has physical meaning as the distribution of states, and is the dual of the value function in optimal control. Besides, we propose an Ordinary Differential Equation (ODE) approach to compute the density function, and on top of that a primal-dual algorithm that solves optimal control problems with density constraint. We show the duality relationship and the corresponding primal-dual algorithms for constrained optimal control for both continuous dynamic systems and MDP. Though the formulations are quite different, the math behind them are essentially the same.

The main contributions of this paper are (i) establishes the duality relationship between the value function and the density function in multiple cases, including continuous dynamic systems and Markov Decision Processes (ii) proposes primal-dual algorithms to solve constrained optimal control synthesis for the two cases mentioned previously (iii) discusses extensions of the primal-dual algorithm to allow exogenous disturbance and to include multiple MDPs.

The paper is organized as follows. Section \ref{sec:review} reviews the basic concepts and methods for optimal control and MDP, and the concept of density function in both cases. Section \ref{sec:duality} presents the main result, the duality relationship between the value functions and density functions. Following the duality result, Section \ref{sec:constrained_optimal_control} and Section \ref{sec:MDP_constrained} presents the primal-dual algorithms for constrained optimal control for dynamic systems and MDPs, respectively. Two example cases are presented in Section \ref{sec:result} and finally we conclude in Section \ref{sec:conclusion}.

\textit{Nomenclature} For the remainder of the paper, $\mathbb{N}$ denotes the set of natural numbers, $\mathbb{N}_+$ denotes the positive natural number, $\mathbb{R}^n$ denotes the Euclidean space and $\mathbb{R}^n_{\ge0}$ denotes the positive orthant. Given a dynamic equation $\dot{x}=f(x)$, $\Phi_f(x_0,T)$ denotes the flow map of the dynamics with initial state $x_0$ and horizon $T$. $\left\langle {\cdot,\cdot} \right\rangle$ denotes the inner product, depending on the context. For example, ${\left\langle {a,b} \right\rangle _\pazocal{X}}=\int_\pazocal{X} {a(x) \cdot b(x)dx} $ for two functions $a$ and $b$ defined on $\pazocal{X}$. $\mathbf{0}$ denotes a vector of all zeros or a function that is always zero, depending on the context. $\mathds{1}_S$ denotes the indicator function of a set $S$. A function is of class $C^k$ if it is $k$-th differentiable and the $k$-th derivative (partial derivatives in the case of multivariate functions) is continuous. Given a discrete set $S=\left\{s_1,s_2,...,s_N\right\}$ and a function $h:S\to\mathbb{R}$, $\overline{h}$ denotes the vector form of $h$, and is defined as $[h(s_1),h(s_2),...,h(s_N)]^\intercal$.

\section{Background review and problem setup}\label{sec:review}
In this section, we review the concept of density function and optimal control, and formally define the problem to solve. We will review the optimal control formulation with continuous state and input space as well as the Markov decision process, where the state space and input space are discrete.
\subsection{Optimal control and value function}\label{sec:opt_con_review}
There are numerous results in optimal control, we review the setting with continuous state and input space and continuous time. Given a dynamic system:
\begin{equation}\label{eq:dyn_control}
  \dot{x}=F(x,u),x\in\pazocal{X}=\mathbb{R}^n,u\in\pazocal{U}\subseteq\mathbb{R}^m,
\end{equation}
the standard formulation is the following:
\begin{equation*}\label{eq:opt_con}
  \mathop {\min }\limits_u \int_0^T {C\left( {x\left( t \right),u\left( t \right)} \right)dt}  + D\left( {x\left( T \right)} \right)\;\textrm{s.t.}\; \dot x = F\left( {x,u} \right),
\end{equation*}
where $x\in\pazocal{X}= \mathbb{R}^n$ is the state, $u\in\pazocal{U}\subseteq\mathbb{R}^m$ is the control input, $\dot{x}=F(x,u)$ is the dynamics described as an ODE, $T\in\mathbb{R}_{\ge 0}$ is the horizon of the problem, which can be fixed or dependent on $x$. $C:\pazocal{X}\times\pazocal{U}\to\mathbb{R}$ is the running cost function and $D:\pazocal{X}\to\mathbb{R}$ is the terminal cost function.
\begin{rem}
  For simplicity, we only consider time-invariant dynamics. It should be straightforward to extend the results to the case with time-varying dynamics.
\end{rem}
 The Pontryagin Maximum Principle (PMP) \cite{pontryagin2018mathematical} gives necessary conditions for the optimality of the solution, and the standard tool for solving the optimal control problem is dynamic programming\cite{bellman2013dynamic}, which utilizes the principle of optimality and formulates the problem as a Hamilton Jacobi Bellman PDE:
\begin{equation*}\label{eq:HJB}
  \left\{ {\begin{array}{*{20}{l}}
{\frac{{\partial V}}{{\partial t}} + \mathop {\min }\limits_{u\in\pazocal{U}} \left\{ {\nabla_x V\left( {t,x} \right) \cdot F\left( {x,u} \right) + C\left( {x,u} \right)} \right\} = 0}\\
{V\left( {T,x} \right) = D\left( x \right)}
\end{array}} \right.,
\end{equation*}
where the value function $V:[0,T]\times\pazocal{X}\to\mathbb{R}$ is defined as
\begin{equation*}\label{eq:V_def}
\resizebox{1.0\hsize}{!}{$
\begin{aligned}
  V\left( {t,x_0} \right) = \mathop {\min }\limits_{u(\cdot)\in\pazocal{U}(\cdot)} &\left\{ {\int_t^T {C\left( {x\left( t \right),u\left( t \right)} \right)dt}  + D\left( {x\left( T \right)} \right)} \right\}, \\
  \rm{s.t.}\; &x(t)=x_0,\forall \tau\in[t,T],\dot{x}(\tau)=F(x(\tau),u(\tau))
\end{aligned}
$}
\end{equation*}
which is the optimal cost-to-go of the optimal control problem for an initial condition $x_0$ at time $t$. Once the value function is known, the optimal policy is then
\begin{equation*}\label{eq:opt_policy}
  {\mathbf{u}^ \star }\left( t,x \right) = \mathop {\arg \min }\limits_{u\in\pazocal{U}} \left\{ {\nabla_x V\left( {t,x} \right) \cdot F\left( {x,u} \right) + C\left( {x,u} \right)} \right\}.
\end{equation*}

\subsection{Density function for dynamic systems}\label{sec:density_review}
On the other hand, density function was proposed by Anders Rantzer in \cite{rantzer2001dual} as a dual to Lyapunov function. The density function $\rho:[0,T]\times\pazocal{X}\to\mathbb{R}_{\ge 0}$ can be understood as the measure of state concentration in the state space.
We consider a dynamic system described by
\begin{equation}\label{eq:dyn_eq}
  \dot{x}=f(x),\;x\in\mathbb{R}^n,\;f\in C^1.
\end{equation}
Under \eqref{eq:dyn_eq}, the evolution of a differentiable density function follows the Liouville PDE:
\begin{equation}\label{eq:liouville}
  \begin{aligned}
\frac{{\partial \rho }}{{\partial t}} + \nabla_x  \cdot \left( {\rho\cdot f} \right) &= \phi(t,x,\rho) \\
\rho \left( {0,x} \right) &= {\rho _0}\left( x \right),
\end{aligned}
\end{equation}
where $\nabla_x \cdot (\rho\cdot f)=\nabla_x\rho\cdot f +\nabla_x\cdot f \cdot \rho=\sum\limits_{i=1}^n\frac{\partial f_i}{\partial x_i} \cdot \rho+\sum\limits_{i=1}^{n}\frac{\partial\rho}{\partial x_i}f_i$, $\rho_0:\mathbb{R}^n\to\mathbb{R}_{\ge 0}$ is the initial density function, and $\phi:[0,\infty)\times\pazocal{X}\times\mathbb{R}\to\mathbb{R}$ is the supply function, $\phi(t,x_0,\rho(x_0,t))>0$ means a source, i.e., states with initial condition $x(t)=x_0$ appears at $x_0$ with intensity $\phi(t,x_0,\rho(x_0,t))$, and $\phi(t,x_0,\rho(x_0,t))<0$ denotes a sink, i.e. states exit the system with intensity $\phi(t,x_0,\rho(t,x_0))$ at $x_0$, time $t$. We will later show the condition under which $\rho$ is differentiable. We allow $\phi$ to depend on $\rho$ to allow more flexible characterization of the supply function.
\begin{rem}
  Note the difference between $\phi_+$ and $\rho_0$. $\phi_+$ specifies how new states enter the system over time, while $\rho_0$ specifies the initial distribution of the states at $t=0$.
\end{rem}
\begin{rem}
  The Liouville equation does not require the density function to be nonnegative, but any density function that conveys physical meaning should be nonnegative.
\end{rem}
The Liouville PDE can be transformed and solved as an ODE since
\begin{equation}\label{eq:Liouville_ODE_prep}
  \frac{{\partial \rho }}{{\partial t}} + \nabla_x  \cdot \left( {\rho \cdot f} \right) = {\left. {\frac{{d\rho }}{{dt}}} \right|_{\dot x = f(x)}}+(\nabla_x\cdot f) \rho = \phi.
\end{equation}
This implies that the evolution of the density function alone the trajectory of the dynamic system $\dot{x}=f(x)$ satisfies the following ODE:
\begin{equation}\label{eq:Liouville_ODE}
  \left[ {\begin{array}{*{20}{c}}
{\dot x}\\
{\dot \rho }
\end{array}} \right] = \left[ {\begin{array}{*{20}{c}}
{f\left( {x} \right)}\\
{\phi \left( {t,x,\rho} \right) - \nabla_x  \cdot f\left( {t,x} \right)\rho }
\end{array}} \right].
\end{equation}
Note that here $\rho$ does not have to be differentiable since its evolution along the trajectory of $\dot{x}=f(x)$ is completely determined by $f$ and the initial condition $\rho_0$, and the solutions of an ODE do not intersect.

With \eqref{eq:Liouville_ODE}, we can evaluate the density function at any state $x$, any time $t>0$ with the following two step procedure:
\begin{itemize}
  \item First, solve the reverse ODE of $\dot{x}=-f(x)$ with initial condition $x_t$ to get $\Phi_{-f}(x,t)=\Phi_f(x,-t)$.
  \item Then, solve the extended ODE in \eqref{eq:Liouville_ODE} with initial condition $[\Phi_f(x,-t),\rho_0(\Phi_f(x,-t))]^\intercal$ to time $T$.
\end{itemize}

Define the extended dynamics in \eqref{eq:Liouville_ODE} as $\overline{f}$. Given an initial density function $\rho_0$, from the two-step procedure shown above, we have
     \begin{equation*}\label{eq:rho_rep}
    \rho \left( {t,x} \right) = {\Phi _{\overline f}}{\left( {{{\left[ {{\Phi _f}\left( {x, - t} \right),{\rho _0}\left( {{\Phi _f}\left( {x, - t} \right)} \right)} \right]}^\intercal},t} \right)_{ \downarrow \rho }},
  \end{equation*}
  where $\downarrow \rho$ means the projection of $[x^\intercal,\rho]^\intercal$ to $\rho$.

We give the following theorem for the differentiability of the density function.
\begin{thm}\label{thm:diff_rho}
  Given a dynamic equation in \eqref{eq:dyn_eq} where $f\in C^1$, if $\rho_0\in C^1$, and $\phi\in C^0$, then $\rho\in C^1$.
\end{thm}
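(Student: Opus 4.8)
The plan is to read $\rho$ off the flow-map representation established just above the statement, $\rho(t,x)=\Phi_{\overline{f}}\big([\Phi_f(x,-t),\rho_0(\Phi_f(x,-t))]^\intercal,t\big)_{\downarrow\rho}$, and to show that its right-hand side is a composition of $C^1$ maps, so that the chain rule yields $\rho\in C^1$. The three ingredients to analyze are the flow $(t,x)\mapsto\Phi_f(x,t)$ of $\dot x=f(x)$, the initial density $\rho_0$, and the flow $\Phi_{\overline{f}}$ of the extended dynamics $\overline{f}(t,x,\rho)=\big(f(x),\,\phi(t,x,\rho)-(\nabla_x\cdot f)(x)\rho\big)$ from \eqref{eq:Liouville_ODE}.

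First I would invoke the classical theorem on differentiable dependence of ODE solutions on time and initial data: if a vector field is $C^k$, its flow is $C^k$ jointly in the integration time and the initial state, proved via the variational equation together with Gr\"onwall's inequality (or a contraction argument in a space of $C^k$ curves). Applying this with $k=1$ to $f\in C^1$ gives $(t,x)\mapsto\Phi_f(x,t)\in C^1$, hence also $(t,x)\mapsto\Phi_f(x,-t)\in C^1$; composing with $\rho_0\in C^1$ shows that the initial condition $[\Phi_f(x,-t),\rho_0(\Phi_f(x,-t))]^\intercal$ fed to $\Phi_{\overline{f}}$ is a $C^1$ function of $(t,x)$. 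It then suffices to show that $\Phi_{\overline{f}}$ is $C^1$ in its initial state and in the integration horizon; composing and projecting via $\downarrow\rho$ would then give $\rho\in C^1$.

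The main obstacle is exactly the regularity of $\Phi_{\overline{f}}$, since $\overline{f}$ is \emph{not} visibly $C^1$: its $\rho$-component contains $\phi$, assumed only $C^0$, and the divergence $\nabla_x\cdot f$, which for $f\in C^1$ is merely $C^0$. A continuous right-hand side gives $C^1$ dependence on the integration horizon (fundamental theorem of calculus along trajectories) but not automatically differentiable dependence on the initial state. To handle this I would exploit the cascade structure of \eqref{eq:Liouville_ODE}: the $x$-subsystem decouples and $x(s)=\Phi_f(x_0,s)$ is already $C^1$ in $(s,x_0)$; substituting it reduces the problem to the scalar nonautonomous ODE $\dot\rho=\phi(s,\Phi_f(x_0,s),\rho)-(\nabla_x\cdot f)(\Phi_f(x_0,s))\,\rho$ with initial value $\rho_0(x_0)$, whose dependence on $x_0$ I would analyze through the corresponding scalar variational equation, bounding $\partial_{x_0}\rho$ by Gr\"onwall (when $\phi$ is independent of $\rho$ this specializes to differentiating a variation-of-constants formula under the integral sign). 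This is where the smoothness hypotheses are genuinely consumed and where I expect the most care to be needed: making this step rigorous requires the integrand — built from $\phi$ and $\nabla_x\cdot f$ composed with the already $C^1$ flow — to be $C^1$ in $x_0$, which effectively asks $\phi$ to be $C^1$ and $f$ to be $C^2$; with only $\phi\in C^0$ and $f\in C^1$ one cleanly obtains continuity of $\rho$ and $C^1$-regularity in $t$, while full differentiability in $x$ appears to need either these stronger hypotheses or additional structure on $\phi$. Once the variational solutions are in hand, joint continuity of all partials of $\rho$ follows from continuity of the data, completing the argument.
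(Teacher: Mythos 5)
Your proposal follows the same route as the paper's own proof---represent $\rho$ via the two-step flow construction, $\rho(t,x)=\Phi_{\overline f}\big([\Phi_f(x,-t),\rho_0(\Phi_f(x,-t))]^\intercal,t\big)_{\downarrow\rho}$, and differentiate the composition---and the place where you stop is exactly the place where the paper's argument is unsound. The paper closes the critical step by applying its Lemma~\ref{lemma:diff_initial} with $k=0$: from $\overline f\in C^0$ it concludes $\Phi_{\overline f}\in C^1$, hence existence and continuity of $\partial\rho(t,x_t)/\partial x_0$ and $\partial\rho(t,x_t)/\partial\rho_0(x_0)$. But classical ODE theory gives $C^k$ dependence on initial data only for $C^k$ (in particular locally Lipschitz) fields with $k\ge 1$; for a merely continuous field the extra degree of smoothness is gained only in the time variable (since $\dot x=f(x(t))$), uniqueness of solutions can already fail, and $C^1$ dependence on the initial state does not hold in general. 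Your refusal to take that step, and your reduction via the cascade structure of \eqref{eq:Liouville_ODE} to the scalar characteristic equation for $\rho$, correctly isolates what is really needed: differentiability in $x_0$ of the integrand built from $\phi$ and $\nabla_x\cdot f$, i.e.\ roughly $\phi\in C^1$ and $f\in C^2$ (or at least $\nabla_x\cdot f\in C^1$).

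Your caution is justified, because the theorem is false under the stated hypotheses: take $f\equiv 0$ (so $f\in C^\infty$), $\rho_0\in C^1$, and $\phi(t,x,\rho)=g(x)$ with $g$ continuous but nowhere differentiable; then $\rho(t,x)=\rho_0(x)+t\,g(x)$, which for every $t>0$ fails to be differentiable in $x$, although it is continuous and $C^1$ in $t$---precisely the regularity your argument does deliver. (A similar failure occurs with $\phi\equiv 0$ and $f\in C^1$ whose divergence is not differentiable, since $\rho(t,x)=\rho_0(\Phi_f(x,-t))\det D_x\Phi_f(x,-t)$ inherits only $C^0$ regularity from the Jacobian.) So your proposal does not, and cannot, establish the statement as written; what it establishes is the correct diagnosis and the correct repair: under the strengthened hypotheses $\phi\in C^1$ (in its arguments, with enough regularity in $\rho$ to ensure well-posedness of the characteristic ODE) and $\nabla_x\cdot f\in C^1$, e.g.\ $f\in C^2$, your variational-equation/Gr\"onwall argument---or equivalently the paper's chain-rule computation, now with $\overline f\in C^1$ and the standard $C^1$-dependence theorem---goes through. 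The gap you flagged is a gap in the paper's proof (its Lemma is applied at $k=0$, outside the range where it is valid), not a shortcoming of your approach.
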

To prove Theorem \ref{thm:diff_rho}, we need the following lemma:
\begin{lemma}\label{lemma:diff_initial}
  If $f\in C^k$ in an open neighborhood around $x_0$, then $\forall t>0$, the flow map $\Phi_f\in C^{k+1}$ inside an open neighborhood around $\Phi_f(x_0,t)$.
\end{lemma}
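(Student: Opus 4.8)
The plan is to localize the problem and then induct on $k$, using the variational (first-order sensitivity) equation as the engine and the flow ODE $\partial_s\Phi_f = f\circ\Phi_f$ to supply the extra degree of regularity. First I would pass to a compact setting: fix $t>0$, and, since the reference trajectory $s\mapsto\Phi_f(x_0,s)$ on $[0,t]$ depends continuously on $x_0$, choose a closed tube $\pazocal{T}$ around the segment $\{\Phi_f(x_0,s):s\in[0,t]\}$ on which $f$ is $C^k$ and all the norms below are bounded (for $t$ in a suitable range the tube stays inside the given neighborhood; a general $t$ follows by covering the segment by finitely many such neighborhoods and composing the local flows). On the preimage under $\Phi_f(\cdot,t)$ of a small ball around $\Phi_f(x_0,t)$ the flow map is then well defined, and it suffices to prove the regularity there.

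The base case is the classical theorem on $C^1$-dependence on initial conditions. Existence and uniqueness of $\Phi_f$ on $\pazocal{T}$ follow from Picard--Lindel\"of together with a Gr\"onwall estimate; the candidate spatial derivative $U(x,s):=\partial_x\Phi_f(x,s)$ is then shown to solve the linear variational equation
\begin{equation*}
  \partial_s U = Df\bigl(\Phi_f(x,s)\bigr)\,U,\qquad U(x,0)=I,
\end{equation*}
by estimating the difference quotients of $\Phi_f$ in $x$ against this linear system and applying Gr\"onwall; joint continuity of $(x,s)\mapsto U(x,s)$ comes from continuous dependence of solutions of linear ODEs on their (continuous) coefficients. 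This yields $\Phi_f\in C^1$ jointly, and one further $s$-derivative is automatic since $\partial_s\Phi_f=f(\Phi_f)$ is already $C^1$. For the inductive step, assume the statement for $k-1$ and let $f\in C^k$. I would form the extended vector field $\bar F(x,U)=\bigl(f(x),\,Df(x)\,U\bigr)$ on $\pazocal{X}\times\mathbb{R}^{n\times n}$; since $Df\in C^{k-1}$ we have $\bar F\in C^{k-1}$, so by the inductive hypothesis its flow is one degree smoother than $\bar F$. By uniqueness of solutions the first block of $\Phi_{\bar F}$ is $\Phi_f$ and the second block is $U=\partial_x\Phi_f$; hence $\partial_x\Phi_f$ inherits the regularity of $\Phi_{\bar F}$, which together with the base case gives the asserted order for $\Phi_f$. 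Finally, differentiating $\partial_s\Phi_f=f(\Phi_f)$ and using the regularity just obtained for $f$ and $\Phi_f$ shows that every partial derivative of $\Phi_f$ of the relevant order stays continuous after one more $\partial_s$, closing the induction.

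The hard part will be the base case rather than the induction. The difference-quotient/Gr\"onwall argument has to be run with bounds that are uniform in $x$ over $\pazocal{T}$, so that the pointwise limit is genuinely the $C^1$ derivative, and the passage from ``$f\in C^k$ only near $x_0$'' to regularity of the time-$t$ flow relies on the finite-covering/composition step to carry the smoothness along the whole trajectory segment. The bookkeeping in the inductive step --- identifying the flow of $\bar F$ with $(\Phi_f,\partial_x\Phi_f)$ and tracking the nested shrinking neighborhoods at each stage --- is routine but requires care, in particular keeping the degree counts consistent at the bottom of the recursion.
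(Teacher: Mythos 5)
Your sketch is the classical argument behind the result the paper invokes (the paper gives no proof of its own, deferring to Lemma 2.3 of Teschl): Picard--Lindel\"of plus Gr\"onwall, the variational equation $\partial_s U = Df(\Phi_f(x,s))\,U$ with $U(x,0)=I$, the extended field $\bar F(x,U)=(f(x),Df(x)U)$ for the induction, and the flow equation $\partial_s\Phi_f=f(\Phi_f)$ to pick up extra time regularity. That machinery is sound, but your degree count is not. What this argument delivers is $\Phi_f\in C^k$ jointly in $(x,t)$ when $f\in C^k$, with one additional derivative available only in the $t$ direction; it does not produce the pure $(k+1)$-st \emph{spatial} derivative that the literal statement ``$\Phi_f\in C^{k+1}$'' requires. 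Concretely: your inductive step appeals to ``the flow of $\bar F$ is one degree smoother than $\bar F$'' to conclude $U=\partial_x\Phi_f\in C^k$, but your base case only establishes $C^1$ spatial dependence for $f\in C^1$ (the extra derivative you note there is a $t$-derivative), so the one-degree spatial gain is never actually available at the bottom of the recursion; and your closing step, differentiating $\partial_s\Phi_f=f(\Phi_f)$, only controls mixed derivatives containing at least one $\partial_s$, leaving $\partial_x^{k+1}\Phi_f$ unaccounted for.

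This is not a repairable bookkeeping slip, because the spatial gain is false. Take $n=1$ and $f(x)=x|x|\in C^1$: for $t>0$ the flow is $x/(1-xt)=x+tx^2+\dots$ for $x\ge 0$ and $x/(1+xt)=x-tx^2+\dots$ for $x\le 0$, so $\partial_x^2\Phi_f(\cdot,t)$ jumps from $-2t$ to $2t$ at $x=0$ and the time-$t$ map is not $C^2$ although $f\in C^1$. The correct statement---and the one actually proved in the cited reference---is $f\in C^k\Rightarrow\Phi_f\in C^k$ jointly, with $\partial_t\Phi_f\in C^k$ as well, i.e.\ the ``$+1$'' holds only in the time direction; your argument, once the induction is restated at that level, proves exactly this. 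That weaker form is also all that the paper's subsequent use in Theorem \ref{thm:diff_rho} can legitimately draw on (there it is applied to get existence and continuity of $d x_t/d x_0$ from $f\in C^1$). So: essentially the right proof of the right lemma, but as written it claims, like the paper's statement itself, a spatial derivative that the argument does not and cannot supply.
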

This is adopted from Lemma 2.3 in \cite{teschl2012ordinary}, see the proof therein.
\begin{proof}[Proof of Theorem \ref{thm:diff_rho}]
 Since $f\in C^1$, $\nabla_x\cdot f\in C^0$, and $\phi\in C^0$, therefore $\overline{f}\in C^0$. For any $t>0$, $x_0\in\mathbb{R}^n$, let $x_t = \Phi_f(x_0,t)$, then by Lemma \ref{lemma:diff_initial}, $\frac{d x_t}{d x_0}$ exists and is continuous. Since $f\in C^1$, so is $-f$, therefore $\frac{d x_0}{d x_t}$ exists and is continuous. Since $\overline{f}\in C^0$, with Lemma \ref{lemma:diff_initial}, $\Phi_{\overline{f}}\in C^1$, which implies $\frac{\partial \rho(t,x_t)}{\partial \rho_0(x_0)}$ and $\frac{\partial \rho(t,x_t)}{\partial x_0}$ exist and are continuous. By the chain rule,
  \begin{equation*}
    \frac{{\partial \rho (t,{x_t})}}{{\partial {x_t}}} = \frac{{\partial \rho (t,{x_t})}}{{\partial {\rho _0}({x_0})}}\frac{{d {\rho _0}({x_0})}}{{d {x_0}}}\frac{{d {x_0}}}{{d {x_t}}} + \frac{{\partial \rho (t,{x_t})}}{{\partial {x_0}}}\frac{{d {x_0}}}{{d {x_t}}},
  \end{equation*}
  where all the partial derivatives on the RHS exist and are continuous, therefore $\frac{{\partial \rho (t,{x_t})}}{{\partial {x_t}}}$ exists and is continuous. By \eqref{eq:Liouville_ODE_prep}, $\frac{\partial \rho}{\partial t}=\phi(t,x_t,\rho)-\nabla_x\cdot(\rho\cdot f)$, which exists since $\frac{{\partial \rho (t,{x_t})}}{{\partial {x_t}}}$ exists, and is continuous. Therefore $\rho\in C^1$.
\end{proof}
 For a stationary supply function, i.e. $\phi$ only depending on $x$ and $\rho$, one would hope that there exists a stationary density function that any initial condition converges to, this is not always the case, but we provide sufficient condition for the convergence.

\begin{prop}\label{prop:stationary_rho}
  Given a stationary supply function $\phi$ and an initial density function $\rho_0$, assume that there exists a $\rho_s:\mathbb{R}^n\to\mathbb{R}_{\ge 0}$ such that
   \[\forall x\in\mathbb{R}^n, \phi \left(x, \rho_s  \right) - \nabla  \cdot \left( {{\rho_s } \cdot f} \right) = 0.\]
    For any $x\in\mathbb{R}^n$, if there exists $T\ge0$ such that $\forall t\ge T,\rho_0(\Phi_f(x,-t))=\rho_s(\Phi_f(x,-t))$, then $\forall t\ge T, \rho(t,x)=\rho_s(x)$ .
\end{prop}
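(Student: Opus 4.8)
The plan is to use the ODE representation of the density function behind \eqref{eq:Liouville_ODE} together with uniqueness of solutions of ODEs; this is really the method of characteristics applied to the transport part of the Liouville PDE. Fix $x\in\mathbb{R}^n$ and a horizon $t\ge T$, and set $x_0:=\Phi_f(x,-t)$, so that $\Phi_f(x_0,t)=x$ and, for $s\in[0,t]$, $\Phi_f(x_0,s)=\Phi_f(x,s-t)$. Note that the conclusion is obtained one $t$ at a time: for each such $t$ the hypothesis supplies exactly the single equality $\rho_0(x_0)=\rho_s(x_0)$ that will serve as a matched initial condition, and quantifying over all $t\ge T$ then yields the ``for all $t\ge T$'' in the statement.

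The key step is the claim that $s\mapsto(\Phi_f(x_0,s),\,\rho_s(\Phi_f(x_0,s)))$, $s\in[0,t]$, is an integral curve of the extended dynamics in \eqref{eq:Liouville_ODE}. Its state component trivially satisfies $\dot x=f(x)$. For the density component I would differentiate along the flow, using that $\rho_s$ is $C^1$ (implicit in the stationarity condition, which already involves $\nabla_x\cdot(\rho_s f)$): by the chain rule, $\tfrac{d}{ds}\rho_s(\Phi_f(x_0,s))=\nabla_x\rho_s\cdot f$ evaluated at $\Phi_f(x_0,s)$. Expanding $\nabla_x\cdot(\rho_s f)=\nabla_x\rho_s\cdot f+(\nabla_x\cdot f)\rho_s$ and substituting the hypothesis $\phi(x,\rho_s)-\nabla_x\cdot(\rho_s f)=0$ gives $\nabla_x\rho_s\cdot f=\phi(x,\rho_s)-(\nabla_x\cdot f)\rho_s$, which is precisely the right-hand side of the $\rho$-equation in \eqref{eq:Liouville_ODE} (recall $\phi$ is stationary). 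Hence the candidate curve solves the extended ODE.

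Now the actual density trajectory $s\mapsto(\Phi_f(x_0,s),\,\rho(s,\Phi_f(x_0,s)))$ solves the same extended ODE, and by the first paragraph it shares the initial condition $(x_0,\rho_0(x_0))=(x_0,\rho_s(x_0))$ with the candidate curve. Uniqueness of solutions of the extended ODE then forces the two curves to coincide on $[0,t]$; evaluating at $s=t$ and using $\Phi_f(x_0,t)=x$ gives $\rho(t,x)=\rho_s(x)$. Since $t\ge T$ was arbitrary, the proposition follows.

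The point needing care is the appeal to uniqueness for the extended ODE: it is not guaranteed by $\phi\in C^0$ alone, and I would make it explicit as a standing regularity hypothesis (for instance, $\phi$ locally Lipschitz in its $\rho$-argument), consistent with the remark following \eqref{eq:Liouville_ODE} that solutions of an ODE do not intersect. A minor additional point is that the backward flow $\Phi_f(x,-t)$ must be defined for all $t\ge T$ (completeness of $f$, or at least existence along the backward orbit through $x$), which I would likewise assume; the remaining manipulations are routine.
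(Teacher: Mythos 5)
Your proposal is correct and follows essentially the same route as the paper: the paper's proof applies the representation $\rho(t,x)=\Phi_{\overline f}([\Phi_f(x,-t),\rho_0(\Phi_f(x,-t))]^\intercal,t)_{\downarrow\rho}$, substitutes $\rho_s$ for $\rho_0$ at the pullback point using the hypothesis, and concludes by the invariance of $\rho_s$ under the extended flow, which is exactly the characteristic-curve-plus-uniqueness argument you spell out. Your explicit chain-rule verification that $s\mapsto(\Phi_f(x_0,s),\rho_s(\Phi_f(x_0,s)))$ solves \eqref{eq:Liouville_ODE}, and your regularity caveats (Lipschitz dependence of $\phi$ on $\rho$ for uniqueness, backward completeness of the flow), simply make precise what the paper's third equality leaves implicit.
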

\begin{proof}
  The proof follows from the fact that
  \begin{equation*}
    \begin{aligned}
\forall t \ge T,\rho \left( {t,x} \right) &= {\Phi _{\bar f}}{\left( {{{\left[ {{\Phi _f}\left( {x, - t} \right),{\rho _0}\left( {{\Phi _f}\left( {x, - t} \right)} \right)} \right]}^\intercal},t} \right)_{ \downarrow \rho }}\\
 &= {\Phi _{\bar f}}{\left( {{{\left[ {{\Phi _f}\left( {x, - t} \right),{\rho_s }\left( {{\Phi _f}\left( {x, - t} \right)} \right)} \right]}^\intercal},t} \right)_{ \downarrow \rho }}\\
 &= {\rho_s }\left( x \right)
\end{aligned}
  \end{equation*}
\end{proof}
\subsection{Markov decision process}\label{sec:MDP_intro}
The discrete counterpart of optimal control is the Markov decision process (MDP), which is defined as a 4-tuple $(S,A,P_a,R_a)$ where
\begin{itemize}
  \item $S$ is a finite set of states, $N=|S|$ is the number of states,
  \item $A$ is a finite set of actions (sometimes the action at state $s$ is limited to $A_s\subseteq A$), $M=|A|$ is the number of actions,
  \item $P_a(s,s')=\mathbb{P}(s(t+1)=s'\mid s(t)=s,a(t)=a)$ is the transition probability from $s$ to $s'$ under action $a$,
  \item $R_a(s,s')$ is the reward associated with the transition from $s$ to $s'$ under action $a$.
\end{itemize}
A standard MDP optimization solves for the optimal policy that maximizes the discounted cumulative reward
\begin{equation}\label{eq:MDP_cost}
\sum\nolimits_{t = 0}^\infty  {{\gamma ^t}{R_{{a(t)}}}\left( {{s(t)},{s(t + 1)}} \right)},
\end{equation}
where $0\le \gamma\le 1$ is the discount factor.

A policy $\pi$ determines what action to take given the state. A policy is deterministic if it maps a state to a deterministic action, stochastic if it maps a state to a distribution over multiple actions. A policy is stationary if it does not change with time. It can be proved that for a finite MDP with the reward function defined in \eqref{eq:MDP_cost}, there always exists a stationary deterministic policy \cite{puterman2014markov}. In the stochastic policy case, let
\begin{equation*}
\begin{aligned}
  \Delta_s&\doteq\left\{\alpha\in\mathbb{R}_{\ge0}^M\mid\sum\nolimits_{j=1}^M\alpha_j=1,a_j\notin A_{s}\rightarrow \alpha_j=0\right\},\\
  \end{aligned}
\end{equation*}
be the set of all possible stochastic policy at state $s$, which is a probability simplex with entries corresponding to actions not available at $s$ constrained to 0. We let $\pi(s)=[\pi(a_1|s),...,\pi(a_M|s)]\in\Delta_s$ denote the policy at state $s$, where $\pi(a|s)\ge 0$ denotes the probability of taking action $a$ at state $s$, and $\sum\limits_{a\in A_s}\pi(a|s)=1$.

The optimal strategy of MDPs can also be solved by dynamic programming, where it appears as the value iteration method:
\begin{equation*}\label{eq:MDP_value_iteration}
\resizebox{1.0\hsize}{!}{$
\begin{aligned}
\pi(s)& = \mathop {\arg \max }\limits_{\pi(s)\in\Delta_s} \left\{ {\sum\limits_a\pi(a|s)\sum\limits_{s'\in S} {{P_a}\left( s,s' \right)\left( {{R_a}\left( s,s' \right) + \gamma V\left( {s'} \right)} \right)} } \right\}\\
V(s) &= \sum\limits_{a\in A}\sum\limits_{s'\in S} {\pi(a|s) P_a\left( {s,s'} \right)\left( {{R_a}\left( {s,s'} \right) + \gamma V(s')} \right)},
\end{aligned}
$}
\end{equation*}
where $V:S\to\mathbb{R}$ is the value function. For notational convenience, we also use the vector form of $V$, defined as $\overline{V}=[V(s_1),V(s_2),...,V(s_N)]^\intercal\in\mathbb{R}^N$
\subsection{Density function for MDP}
Similarly, one can define the density over states in an MDP as $\rho:S\to\mathbb{R}_{\ge 0}$, and the vector form of the density function is defined as $\overline{\rho}=[\rho(s_1),\rho(s_2),...,\rho(s_N)]^\intercal\in\mathbb{R}^N$. For a given policy $\pi$, let $P^{\pi}$ denote the transition probability matrix, where
\begin{equation}\label{eq:P_pi}
\begin{aligned}
  P_{ij}^\pi  &= \mathbb{P}\left( {{s(t+1)} = {s_j}\mid {s(t)} = {s_i},a\sim\pi \left( {{s_i}} \right)} \right)\\
  &=\sum\limits_{a}\pi(a|s_i)P_a(s_i,s_j).
  \end{aligned}
\end{equation}
Given an initial density $\rho_0$ over states, the evolution of the density under $\pi$ follows
\begin{equation*}\label{eq:density_MDP_evo}
  {\overline{\rho}^{t+1}} = ({P^\pi })^\intercal{\overline{\rho}^t} + \overline{\phi} \left( {{\overline{\rho}^t}} \right),
\end{equation*}
where $\phi:S\times\mathbb{R} \to \mathbb{R}$ is the supply function, and $\overline{\phi}(\overline{\rho})=[\phi(s_1,\rho(s_1)),\phi(s_2,\rho(s_2)),...,\phi(s_N,\rho(s_N))]^\intercal$ is its vector form. Here we only consider stationary supply functions, i.e., $\phi$ that do not depend on $t$.
\begin{rem}
   The constraint $\mathbf{1}^\intercal\overline{\rho}^t = 1$ is not enforced as in the case of probability distribution. The probability distribution can be viewed as a special case of density with $\mathbf{1}^\intercal\overline{\rho}^t = 1$ and $\phi=\mathbf{0}$.
\end{rem}
\begin{rem}
  The idea of the dual variable in MDP has been studied, for example, in \cite{altman1999constrained}, and recently in \cite{dai2017boosting}, but they differ from the density function discussed in this paper in that the density function has a physical meaning rather than simply being the dual variable. The evolution of density function is governed by not only the Liouville equation, but also the supply function and the initial condition. Therefore, we can pose constraints on the density function with physical meaning.
\end{rem}
\section{Duality in optimal control of continuous dynamic systems}\label{sec:duality}
In this section, we show the main results of this paper, the duality relationship between the value function and the density function in the continuous optimal control problems and optimal policy synthesis for MDPs.

Depending on the setup of the optimal control problem, the supply function $\phi$ will take different forms. We present several setups, including optimal control problems in the Bolza form with terminal constraint, the infinite horizon discounted case, and the fixed finite horizon case. However, since the derivations are similar, we only present the detail of the first setup, that is, the optimal control problem with terminal constraint. The derivation of the duality result for this setup is the most complicated among the three setups.

For simplicity, we only consider the time invariant case, i.e., the dynamics is time invariant and the supply function $\phi$ is also time invariant, only depending on the state $x$ and the density $\rho$.

\subsection{Duality in optimal control with terminal condition}\label{sec:opt_control_destination}

Given the dynamic system in \eqref{eq:dyn_control}, consider the optimal control problem in the Bolza form with terminal condition $x\in \pazocal{X}_g$, where $\pazocal{X}_g$ is the destination:
\begin{equation*}\label{eq:opt_con_destination}
  \mathop {\min }\limits_u \int_0^T {C\left( {x\left( t \right),u\left( t \right)} \right)dt}  + D\left( {x\left( T \right)} \right)\;\textrm{s.t.}\; \dot x = F\left( {x,u} \right),
\end{equation*}
where $T$ is the first time that the state reaches $\pazocal{X}_g$. We further assume that
\begin{equation*}\label{eq:U_def}
  \pazocal{U}\doteq\left\{u\mid g(u)\le 0 \right\}
\end{equation*}
for some function $g$.

In this case, it can be shown that the value function $V$ only depends on the state $x$, so is the optimal control strategy $\mathbf{u}^\star$.

We consider a supply function $\phi=\phi_+ + \phi_-$, where $\phi_+$ is a stationary nonnegative supply function that only depends on $x$, encoding the information of the distribution of new states entering the state space, and $\phi_-$ takes the following form:
\begin{equation}\label{eq:phi_-}
  {\phi _ - }\left( {t,x,\rho } \right) = \left\{ {\begin{aligned}
&0,&x &\notin {\pazocal{X}_g}\\
 &- \delta \left( t \right)\rho ,&x &\in {\partial\pazocal{X}_g}\\
 &-\phi_+(x), & x &\in {\mathrm{int}(\pazocal{X}_g)},
\end{aligned}} \right.
\end{equation}
where $\partial\pazocal{X}_g$ and $\mathrm{int}(\pazocal{X}_g)$ are the boundary and interior of $\pazocal{X}_g$, $\delta(t)$ is a Dirac delta function at $t$.
This means that the density function immediately becomes zero once the state enters $\pazocal{X}_g$.

An example of this setup is the robot navigation problem where the initial position of the robot follows the distribution $\phi_+$ and the goal is to reach the destination. Another example is the mail collection problem where the destination is the post office and new mails to be collected pop up following the distribution of $\phi_+$.

\begin{ass}\label{ass:inv}
  We assume that there exists a compact set $\pazocal{S}\subseteq\pazocal{X}$ such that for the system in \eqref{eq:dyn_control} under all control strategies, supply function $\phi$, and $\rho_0$ considered in this paper, $\forall t>0, x(t)\in\pazocal{S}$. Clearly, $\forall x\notin\pazocal{S},\phi_+(x)=\rho_0(x)=0$. Furthermore, we assume that $\pazocal{X}_g$ is a compact set and all state in $\pazocal{S}$ can reach $\pazocal{X}_g$ in finite time.

\end{ass}
This is a reasonable assumption since in a typical optimal control problem, the optimal state evolution is bounded. Together with Proposition \ref{prop:stationary_rho}, if $\rho_s$ exists and $\Phi_f(x,-t)\notin\pazocal{S}$, $\rho(t,x)=\rho_s(x)$ since $\rho_s(\Phi_f(x,-t))=\rho_0(\Phi_f(x,-t))=0$.

Instead of the optimal control problem for an individual initial condition, we look at the overall cost of the whole system, similar to \cite{rantzer2001dual}. Given a control strategy $u=\mathbf{u}(x)$, let $V^\mathbf{u}$ be the cost-to-go associated to $\mathbf{u}$ for a given state, then the overall cost rate over time is
\begin{equation*}\label{eq:overall_cost_V}
  J_p = {\left\langle {{\phi _ + },{V^{\bf{u}}}} \right\rangle _\pazocal{X}},
\end{equation*}
This is interpreted as the inner product between the value function and the positive supply. Indeed, for every new state entering the state space at $x$, it incurs a total cost of $V^\mathbf{u}(x)$, and the new states emerge following rate $\phi_+$ across the state space, thus ${\left\langle {{\phi _ + },{V^{\bf{u}}}} \right\rangle _\pazocal{X}}$ is the total cost rate.

By Bellman's principle of optimality, we know that the optimal value function of each state is independent of the state trajectory before the it reaches that state, which implies that there exists a pure state feedback law $\mathbf{u}^\star$ that minimizes the overall cost $J_p$, and is determined by the following stationary HJB equation:
\begin{equation}\label{eq:primal_opt}
  \begin{aligned}
{J_p^ \star } = &{\left\langle {{\phi _ + },{V^{{{\mathbf{u}}^ \star }}}} \right\rangle _\pazocal{X}}\\
\rm{s.t.}&{\mathbf{u}^ \star }(x) = \mathop {\arg \min }\limits_{u \in \pazocal{U}} \;\nabla V \cdot F(x,u)+C(x,u)\\
&C_{\mathbf{u}^\star} + \nabla V \cdot F_{\mathbf{u}^\star} = 0\\
&V{\mid _{{\pazocal{X}_g}}} = D,
\end{aligned}
\end{equation}
where $C_{\mathbf{u}^\star}(x)\doteq C(x,\mathbf{u}^\star(x))$ and $F_{\mathbf{u}^\star}(x)\doteq F(x,\mathbf{u}^\star(x))$.
Since $V^{\mathbf{u}^\star}$ is completely determined by the equality constraint, we leave out the optimization sign. This is referred to as the primal problem.

Alternatively, if the density function reaches a stationary density $\rho_s$, the overall cost rate can also be represented as
\begin{equation}\label{eq:overall_cost_rho}
  J_d ={\left\langle {{C_\mathbf{u}},{\rho_s}} \right\rangle _\pazocal{X}}+{\left\langle {{D},{-\phi_-}} \right\rangle _\pazocal{X}},
\end{equation}
where the first part represents the overall running cost and the second part represents the terminal cost. Indeed, the density function represents the presence of the states in the state space, and \eqref{eq:overall_cost_rho} is simply the inner product of the density with the cost function, including the running cost and terminal cost. With this formulation, the following optimization solves for the optimal overall cost rate:
\begin{equation}\label{eq:dual_opt}
    \begin{aligned}
\mathop {\min }\limits_{\rho_s ,\mathbf{u}} &{\left\langle {\rho_s ,C_{\mathbf{u}}} \right\rangle}_{\pazocal{X}}-{\left\langle {\phi_- ,D} \right\rangle}_\pazocal{X}&& \\
\rm{s.t.}&\nabla  \cdot \left( {\rho_s \cdot F_{\mathbf{u}}) } \right) = \phi ,&(a)&\\
   &\rho_s(x)  \ge 0 ,&(b)&\\
   &\forall x \in\pazocal{X},g(\mathbf{u}(x))\le0 ,&(c)&
\end{aligned}
\end{equation}
where $(a)$ is the Liouville equation, $(b)$ requires that the density stays nonnegative, $(c)$ requires that the input is within the bound $\pazocal{U}$. This is referred to as the dual problem.

The main theorem of this paper states that the value function and the stationary density function are dual to each other.
\begin{thm}\label{thm:opt_con_duality}
  For the dynamic system in \eqref{eq:dyn_control}, the optimization in \eqref{eq:dual_opt} and \eqref{eq:primal_opt} are dual to each other. If there exist unique and differentiable $V$ and $\rho_s$ that are optimal solutions to the two problems, there is no duality gap.
\end{thm}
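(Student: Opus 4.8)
The plan is to realize \eqref{eq:primal_opt} as the Lagrangian dual of \eqref{eq:dual_opt}, with the value function entering as the multiplier attached to the Liouville equality constraint (a), and then to close the gap by exhibiting an explicit primal--dual optimal pair. First I would reduce the effective domain using Assumption \ref{ass:inv}: outside $\pazocal{S}$ both $\rho_s$ and $\phi_+$ vanish; on $\mathrm{int}(\pazocal{X}_g)$ we have $\phi=\phi_++\phi_-=0$ and the density is identically zero; and on $\partial\pazocal{S}$ there is no flux by invariance. So the relevant set is the compact $\Omega\doteq\pazocal{S}\setminus\mathrm{int}(\pazocal{X}_g)$. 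The Dirac-in-time sink in \eqref{eq:phi_-} is read as an absorbing boundary condition on $\partial\pazocal{X}_g$: the rate at which states leave through $\partial\pazocal{X}_g$ is the boundary flux $\rho_s\,(F_\mathbf{u}\cdot n)$, with $n$ the unit normal pointing into $\pazocal{X}_g$, so that $-\langle\phi_-,D\rangle_\pazocal{X}=\oint_{\partial\pazocal{X}_g}D\,\rho_s\,(F_\mathbf{u}\cdot n)\,dS$.

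Next I would introduce a multiplier $V:\pazocal{X}\to\mathbb{R}$ for constraint (a) and form the Lagrangian of \eqref{eq:dual_opt},
\[
L(\rho_s,\mathbf{u},V)=\langle\rho_s,C_\mathbf{u}\rangle_\Omega-\langle\phi_-,D\rangle_\pazocal{X}+\langle V,\nabla\cdot(\rho_s F_\mathbf{u})-\phi\rangle_\Omega,
\]
keeping (b) $\rho_s\ge0$ and (c) $g(\mathbf{u})\le0$ as explicit constraints. Integrating the divergence term by parts over $\Omega$, the $\partial\pazocal{S}$ contribution vanishes, the $\partial\pazocal{X}_g$ contribution combines with the terminal-cost term, and the interior part yields $-\langle\nabla V\cdot F_\mathbf{u},\rho_s\rangle_\Omega$, so that
\[
L=\big\langle\rho_s,\;C_\mathbf{u}+\nabla V\cdot F_\mathbf{u}\big\rangle_\Omega+\oint_{\partial\pazocal{X}_g}(D-V)\,\rho_s\,(F_\mathbf{u}\cdot n)\,dS+\langle\phi_+,V\rangle_\pazocal{X}.
\]
Minimizing over $\rho_s\ge0$ and $g(\mathbf{u})\le0$: the infimum is $-\infty$ unless $\min_{u\in\pazocal{U}}\big(C(x,u)+\nabla V(x)\cdot F(x,u)\big)\ge0$ for all $x\in\Omega$ and $V\le D$ on $\partial\pazocal{X}_g$, in which case it equals $\langle\phi_+,V\rangle_\pazocal{X}$, attained by any $\rho_s$ supported where the first bracket vanishes together with the HJB-minimizing $\mathbf{u}$. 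Since $\phi_+\ge0$, maximizing $\langle\phi_+,V\rangle_\pazocal{X}$ over such $V$ drives those inequalities to equality; a comparison argument along trajectories (any feasible $V$ is a subsolution, hence $V\le V^{\mathbf{u}^\star}$ pointwise, and $V^{\mathbf{u}^\star}$ is itself feasible) identifies the maximizer as exactly the value function determined by the stationary HJB system with $V|_{\pazocal{X}_g}=D$, i.e. \eqref{eq:primal_opt}. This establishes that \eqref{eq:primal_opt} and \eqref{eq:dual_opt} are Lagrangian duals, and weak duality gives $J_d\ge J_p^\star$ for every feasible $(\rho_s,\mathbf{u})$.

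To close the gap I would take $\mathbf{u}=\mathbf{u}^\star$ from \eqref{eq:primal_opt} together with the associated stationary density $\rho_s=\rho_s^{\mathbf{u}^\star}$, which exists by hypothesis and is feasible for \eqref{eq:dual_opt} by Proposition \ref{prop:stationary_rho} (constraint (a) is its defining equation, (b) holds as a physical density, (c) holds since $\mathbf{u}^\star(x)\in\pazocal{U}$). Evaluating the displayed form of $L$ at $V=V^{\mathbf{u}^\star}$, the identity $C_{\mathbf{u}^\star}+\nabla V\cdot F_{\mathbf{u}^\star}=0$ kills the first bracket and $V|_{\pazocal{X}_g}=D$ kills the boundary term, so $J_d(\rho_s^{\mathbf{u}^\star},\mathbf{u}^\star)=\langle\phi_+,V^{\mathbf{u}^\star}\rangle_\pazocal{X}=J_p^\star$. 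Combined with weak duality this gives $\min J_d=J_p^\star$, i.e.\ no duality gap, under the stated uniqueness and differentiability assumptions.

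I expect the main obstacle to be the rigorous bookkeeping at $\partial\pazocal{X}_g$: justifying the integration by parts on the non-smooth set $\Omega$, and, in particular, showing that the Dirac-in-time sink in \eqref{eq:phi_-} produces precisely the boundary-flux pairing $-\langle\phi_-,D\rangle_\pazocal{X}=\oint_{\partial\pazocal{X}_g}D\,\rho_s(F_\mathbf{u}\cdot n)\,dS$ that makes the multiplier term telescope correctly. A secondary subtlety is the step identifying the maximal subsolution of the relaxed HJB inequality with the value function of \eqref{eq:primal_opt}, which relies on a comparison/verification argument (monotonicity of $\langle\phi_+,\cdot\rangle$ in $V$ plus uniqueness of the HJB solution) rather than a purely algebraic duality manipulation.
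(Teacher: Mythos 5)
Your proposal is sound and follows the same basic strategy as the paper -- Lagrangian duality applied to the density LP \eqref{eq:dual_opt}, integration by parts (the adjoint condition under Assumption \ref{ass:inv}), and identification of the multiplier of the Liouville constraint with the value function -- but the execution differs in instructive ways. The paper keeps explicit multipliers $\lambda_0,\lambda_1$ for $\rho_s\ge0$ and $g(\mathbf{u})\le0$, writes the full KKT system (stationarity plus complementary slackness), handles the terminal sink by splitting $-\langle\phi_-,D\rangle_\pazocal{X}$ into the Dirac term on $\partial\pazocal{X}_g$ and the term $\langle\phi_+,D\rangle_{\mathrm{int}(\pazocal{X}_g)}$, then uses Lemma \ref{lemma:phi_balance} ($\int_\pazocal{X}\phi\,dx=0$ at stationarity) to shift $\mu$ so that $\mu\equiv0$ on $\mathrm{int}(\pazocal{X}_g)$ and finally sets $V=\mu+\mathds{1}_{\mathrm{int}(\pazocal{X}_g)}D$, computing $\max\min\pazocal{L}$ directly to show $J_d^\star=\langle\phi_+,V\rangle_\pazocal{X}=J_p^\star$. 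You instead keep the inequality constraints implicit, read the Dirac-in-time sink as an absorbing-boundary flux, obtain the dual function $\langle\phi_+,V\rangle$ over the set of HJB subsolutions with $V\le D$ on $\partial\pazocal{X}_g$, identify the maximizer by a comparison/verification argument, and close the gap by exhibiting the feasible pair $(\rho_s^{\mathbf{u}^\star},\mathbf{u}^\star)$ and evaluating $\pazocal{L}$ at $V^{\mathbf{u}^\star}$. Your route buys a cleaner no-gap argument (weak duality plus an explicit optimal pair, no need for the $\mu$-normalization via Lemma \ref{lemma:phi_balance}, since the boundary condition $V=D$ emerges from the boundary term), at the price of the comparison step and of the boundary-flux identification $-\langle\phi_-,D\rangle_\pazocal{X}=\oint_{\partial\pazocal{X}_g}D\,\rho_s(F_\mathbf{u}\cdot n)\,dS$, which is exactly the delicate point you flag; the paper's Dirac bookkeeping is no more rigorous, just differently organized.

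Two bookkeeping points to fix if you write this out. First, the sign convention: as displayed, your Lagrangian adds $\langle V,\nabla\cdot(\rho_s F_\mathbf{u})-\phi\rangle_\Omega$, but the expression you obtain after integration by parts (with $+\langle\rho_s,C_\mathbf{u}+\nabla V\cdot F_\mathbf{u}\rangle_\Omega$ and $+\langle\phi_+,V\rangle$) corresponds to the opposite sign, i.e.\ to adding $\langle V,\phi-\nabla\cdot(\rho_s F_\mathbf{u})\rangle_\Omega$ as the paper does with $\mu$; harmless but should be made consistent. Second, you drop the contribution of $\phi_-$ on $\mathrm{int}(\pazocal{X}_g)$, namely $\langle\phi_+,D\rangle_{\mathrm{int}(\pazocal{X}_g)}$ (states born inside the goal set, which incur cost $D$ immediately); your final dual function $\langle\phi_+,V\rangle_\pazocal{X}$ over all of $\pazocal{X}$ only equals $J_p^\star$ if you simultaneously extend $V\equiv D$ on $\mathrm{int}(\pazocal{X}_g)$, which is precisely how the paper restores this term through $V=\mu+\mathds{1}_{\mathrm{int}(\pazocal{X}_g)}D$. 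With those two items made explicit, your argument matches the paper's conclusion.
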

Before presenting the proof, we need the following lemma:
\begin{lemma}\label{lemma:phi_balance}
  Under Assumption \ref{ass:inv}, if the system reaches a stationary density distribution,
  \[\int_\pazocal{X} {\phi dx} =0.\]
\end{lemma}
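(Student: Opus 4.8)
The plan is to integrate the stationary Liouville equation over the whole state space and show that the divergence term contributes nothing, which forces $\int_{\pazocal{X}}\phi\,dx$ to vanish. First I would observe that, once the closed-loop system $\dot x=F_{\mathbf{u}}(x)$ has reached a stationary density $\rho_s$, setting $\partial\rho/\partial t=0$ in \eqref{eq:liouville} yields $\nabla_x\cdot(\rho_s\cdot F_{\mathbf{u}})=\phi$ (pointwise, away from $\partial\pazocal{X}_g$). A cleaner equivalent bookkeeping is to track total mass: for any solution $\rho$ one has $\frac{d}{dt}\int_{\pazocal{X}}\rho(t,x)\,dx=\int_{\pazocal{X}}\big(\phi-\nabla_x\cdot(\rho\cdot F_{\mathbf{u}})\big)\,dx$, so once the divergence integral is shown to be zero, stationarity of $\rho_s$ makes the left-hand side vanish and the claim follows.

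The single substantive step is therefore $\int_{\pazocal{X}}\nabla_x\cdot(\rho_s\cdot F_{\mathbf{u}})\,dx=0$. Here I would use Assumption \ref{ass:inv}: all trajectories stay in the compact set $\pazocal{S}$ and $\phi_+=\rho_0=0$ off $\pazocal{S}$, so via the flow representation of $\rho$ and Proposition \ref{prop:stationary_rho}, $\rho_s$ is supported inside $\pazocal{S}$. Picking a ball $B_R\supseteq\pazocal{S}\cup\pazocal{X}_g$, the field $\rho_s\,F_{\mathbf{u}}$ vanishes on $\partial B_R$, and the divergence theorem gives $\int_{\pazocal{X}}\nabla_x\cdot(\rho_s\cdot F_{\mathbf{u}})\,dx=\oint_{\partial B_R}\rho_s\,F_{\mathbf{u}}\cdot n\,dS=0$; since $\phi$ is also supported in $B_R$, this finishes the argument.

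The hard part will be that $\rho_s$ is not globally $C^1$: by \eqref{eq:phi_-} the density is driven to zero the instant a state enters $\pazocal{X}_g$, so $\rho_s$ generically jumps across $\Gamma\doteq\partial\pazocal{X}_g$ and the divergence theorem cannot be applied to $B_R$ directly. I would split $\int_{B_R}=\int_{B_R\setminus\pazocal{X}_g}+\int_{\pazocal{X}_g}$, use Theorem \ref{thm:diff_rho} to get $\rho\in C^1$ in the interior of each piece (the data are locally continuous there), and apply the divergence theorem on each. Inside $\mathrm{int}(\pazocal{X}_g)$ the source $\phi_+$ and the sink $-\phi_+$ cancel pointwise; on $B_R\setminus\pazocal{X}_g$ the flux $\oint_{\Gamma}\rho_s\,F_{\mathbf{u}}\cdot n_\Gamma\,dS$ leaving into $\pazocal{X}_g$ is exactly the absorption rate modeled by the singular term $-\delta(t)\rho$ on $\Gamma$, so the two cancel when assembling $\int_{B_R}\phi\,dx$. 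Alternatively I would regularize $\phi_-$ near $\Gamma$ to a smooth sink of width $\varepsilon$, carry out the argument with an honest $C^1$ density, and let $\varepsilon\to0$. Either route reduces everything to the vanishing of the outer flux on $\partial B_R$, which is immediate from the compact support of $\rho_s$ guaranteed by Assumption \ref{ass:inv}.
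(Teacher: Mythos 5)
Your proposal is correct and follows essentially the same route as the paper: at stationarity $\phi=\nabla\cdot(\rho_s\,F_{\mathbf{u}})$, the integral is reduced via Assumption \ref{ass:inv} to a divergence-theorem boundary flux that vanishes because $\rho_s F_{\mathbf{u}}$ is supported in the compact invariant set (the paper integrates over $\pazocal{S}$ with zero flux on $\partial\pazocal{S}$; you use a ball $B_R\supseteq\pazocal{S}\cup\pazocal{X}_g$, which is the same argument). Your extra care about the possible jump of $\rho_s$ across $\partial\pazocal{X}_g$ (splitting the domain or regularizing the sink) is a legitimate refinement of a point the paper's one-line proof silently glosses over, but it does not change the underlying approach.
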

\begin{proof}
  Under Assumption \ref{ass:inv}, By the adjoint condition:
  \begin{equation*}\label{eq:inv_density}
  \resizebox{1.0\hsize}{!}{$
    \int_\pazocal{X} {\nabla  \cdot \left( {\rho_s  \cdot F_\mathbf{u}} \right)dx}  = \int_\pazocal{S} {\nabla  \cdot \left( {\rho_s  \cdot F_\mathbf{u}} \right)dx} \int_{\partial \pazocal{S}} {\rho_s F_\mathbf{u} \cdot \overrightarrow n \;ds}  = 0.
    $}
  \end{equation*}
  Then \[\int_\pazocal{X} {\phi dx}=\int_\pazocal{X} {  - \nabla  \cdot \left( {\rho_s  \cdot F_\mathbf{u}} \right)dx}    = 0.\]
\end{proof}
\begin{proof}[Proof of Theorem \ref{thm:opt_con_duality}]
We show one direction, from \eqref{eq:dual_opt} to \eqref{eq:primal_opt}, and the other direction is similar. The Lagrangian is formulated as
\begin{equation}\label{eq:dual_Lag}
\begin{array}{c}
  \pazocal{L} = {\left\langle {{\rho _s},C_{\mathbf{u}}} \right\rangle}_\pazocal{X} - {\left\langle {{\phi _ - },D} \right\rangle _\pazocal{X}} + {\left\langle {\mu ,\phi  - \nabla  \cdot \left( {{\rho _s} \cdot F_{\mathbf{u}}} \right)} \right\rangle _\pazocal{X}} \\
  - {\left\langle {{\lambda_0}, - {\rho _s}} \right\rangle _\pazocal{X}}-{\left\langle {{\lambda_1}, g\circ\mathbf{u}} \right\rangle _\pazocal{X}},
\end{array}
\end{equation}
where $\mu:\pazocal{X}\to\mathbb{R}$, $\lambda_0:\pazocal{X}\to\mathbb{R}_+$ and $\lambda_1:\pazocal{U}\to\mathbb{R}_+$ are the Lagrange multipliers for constraint (a), (b), and (c) in \eqref{eq:dual_opt}, respectively.
First notice that 
\[ - {\left\langle {{\phi _ - },D} \right\rangle _\pazocal{X}} = {\left\langle {  \delta {\rho _s},D} \right\rangle} _{\partial\pazocal{X}_g}+{\left\langle {  \phi_+ ,D} \right\rangle} _{\mathrm{int}(\pazocal{X}_g)}\]
 Then by Assumption \ref{ass:inv}, we use the adjoint condition:
\begin{equation*}\label{eq:adjoint}
  {\left\langle {\mu ,\nabla  \cdot \left( {\rho_s  \cdot F_{\mathbf{u}}} \right)} \right\rangle}_\pazocal{X}  =  - {\left\langle {\nabla \mu ,\rho_s  \cdot F_{\mathbf{u}}} \right\rangle}_\pazocal{X}  =  - {\left\langle {\rho_s ,\nabla \mu  \cdot F_{\mathbf{u}}} \right\rangle}_\pazocal{X}.
\end{equation*}
The Lagrangian then becomes
\begin{equation}\label{eq:dual_lag2}
\begin{aligned}
  \pazocal{L} &= {\left\langle {{\rho _s}, C_{\mathbf{u}} +\mathds{1}_{\partial\pazocal{X}_g}\delta D + \nabla \mu F_{\mathbf{u}} + \lambda_0} \right\rangle}_\pazocal{X}\\
  &+{\left\langle {  \phi_+ ,D} \right\rangle} _{\mathrm{int}(\pazocal{X}_g)}
  + {\left\langle {\mu ,\phi } \right\rangle}_\pazocal{X}  - {\left\langle {{\lambda _1},g\circ \mathbf{u}} \right\rangle}_\pazocal{X}
\end{aligned}
\end{equation}
The Kuhn-Karush-Tucker (KKT) condition reads

Stationarity condition:
\begin{equation*}\label{eq:stationarity}
\begin{aligned}
  \frac{{\partial \pazocal{L}}}{{\partial \rho_s }} &=& C_\mathbf{u} +\mathds{1}_{\partial\pazocal{X}_g}\delta D + \nabla \mu F_\mathbf{u} + \lambda_0 &=& 0 \\
  \frac{{\partial \pazocal{L}}}{{\partial u}} &=& \rho_s \left( \frac{{\partial C}}{{\partial u}} + \nabla \mu {\frac{{\partial F}}{{\partial u}} }+\lambda_1{\frac{{\partial g}}{{\partial u}} } \right) &=& 0
  \end{aligned}
\end{equation*}

Complementary slackness:
\begin{equation*}\label{eq:complementary_slackness}
\mu \cdot(\phi-\nabla  \cdot \left( {\rho_s  \cdot F} \right)) =  {\lambda _0}\cdot\rho_s   = \lambda_1\cdot g(u)  =  0
\end{equation*}
This implies that when $\rho_s>0$, i.e. for area in $\pazocal{X}$ with nonzero density,
\begin{equation*}\label{eq:dual_optimality_condition}
  \begin{array}{c}
{\mathbf{u}^ \star(x) } = \mathop {\arg \min }\limits_{g(u)\le 0} \;C(x,u) + \nabla \mu\cdot F(x,u),\\
C_\mathbf{u^\star} + \delta(t) D\mathds{1}_{\partial\pazocal{X}_g} +\nabla \mu F_\mathbf{u^\star} = 0,
\end{array}
\end{equation*}
 which directly come from the stationarity condition and utilized the fact that $\rho_s>0\to \lambda_0=0$.
  Furthermore, since $\rho_s=0$ inside $\pazocal{X}_g$, the optimal input can be picked arbitrarily from $\pazocal{U}$, therefore $\mu$ has to be constant within $\mathrm{int}(\pazocal{X}_g)$. let $\mu_0=\mu(x)\mid_{\mathrm{int}(\pazocal{X}_g)}$. Note that by Lemma \ref{lemma:phi_balance}, at stationary density,
  $\int_\pazocal{X} {\phi dx} =0$, which implies
  \[{\left\langle {\mu  - {\mu _0},\phi } \right\rangle _\pazocal{X}}= {\left\langle {\mu  ,\phi } \right\rangle _\pazocal{X}}-{\left\langle { {\mu _0},\phi } \right\rangle _\pazocal{X}}={\left\langle {\mu ,\phi } \right\rangle _\pazocal{X}} .\]
  Therefore, we can replace $\mu$ with $\mu-\mu_0$ and both the KKT condition and the value of the Lagrangian remain unchanged. Without loss of generality, we can assign $\mu_0=0$. Then $\mu$ satisfies
\begin{equation*}\label{eq:mu_property}
  \begin{array}{l}
\mu  = \left\{ \begin{aligned}
&0,   &x &\in \mathrm{int}(\pazocal{X}_g)\\
&D(x),&x &\in \partial {\pazocal{X}_g}
\end{aligned} \right.,\\
\forall x\notin \pazocal{X}_g,\nabla \mu  \cdot F_{\mathbf{u}^\star} =  - C_{\mathbf{u}^\star}
\end{array}
\end{equation*}
Define
\begin{equation*}\label{eq:dual_V_def}
  V=\mu+\mathds{1}_{\mathrm{int}(\pazocal{X}_g)}D,
\end{equation*}
then we have
\begin{equation*}\label{eq:dual_V_res}
  \begin{array}{l}
V{\mid _{{\pazocal{X}_g}}} = D,\\
\forall x \notin {\pazocal{X}_g},\nabla \mu  \cdot F_{\mathbf{u}^\star} =  - C_{\mathbf{u}^\star},\\
\mathbf{u}^ \star(x)  = \mathop {\arg \min }\limits_{u\in\pazocal{U}} \;C(x,u) + \nabla V \cdot F(x,u),
\end{array}
\end{equation*}
which is exactly the solution of the primal problem in \eqref{eq:primal_opt}.

Besides, from \eqref{eq:dual_lag2}, if such an solution to the optimal problem exists, the dual objective becomes
\begin{equation*}
  {J_d^ \star } = \mathop {\max }\limits_{{\lambda _0},{\lambda _1},\mu }\mathop {\min }\limits_{\rho_s,\mathbf{u} } \pazocal{L} = {\left\langle {{\phi _ + },D} \right\rangle _{{\mathop{\rm int}} ({\pazocal{X}_g})}} + {\left\langle {\phi ,\mu } \right\rangle _\pazocal{X}}.
\end{equation*}
Since $\phi_-\mid_{\pazocal{X}\backslash\pazocal{X}_g}=0$,
\begin{equation*}\label{eq:duality_gap}
  {J_d^ \star }= {\left\langle {\phi_+ ,V } \right\rangle _\pazocal{X}}=J_p^\star,
\end{equation*}
 which shows that there is no duality gap.
\end{proof}

\subsection{Duality in several other forms of optimal control}
Consider an infinite horizon optimal control problem with the following cost function:
\begin{equation*}\label{eq:discount_cost}
  V(x) = \int_0^\infty  {{e^{ - \kappa \tau }}C\left( {x\left( \tau  \right),u\left( \tau  \right)} \right)d\tau },
\end{equation*}
where $\kappa$ is the discount factor.
In this case, the negative supply function takes the following form:
\begin{equation*}\label{eq:negative_supply_discount}
  \forall x\in\pazocal{X}, \phi_-(x,\rho)=-\kappa \rho.
\end{equation*}
The primal optimal control problem is the following:
\begin{equation}\label{eq:primal_discount}
\begin{aligned}
J_p^\star=&\left\langle {V,{\phi _ + }} \right\rangle \\
\rm{s.t.}&C_{\mathbf{u}^\star} + \nabla V \cdot F_{\mathbf{u}^\star} -\kappa V = 0\\
&{\mathbf{u}^ \star }(x) = \mathop {\arg \min }\limits_{ u\in\pazocal{U}} \;C(x,u) + \nabla V \cdot F(x,u).
\end{aligned}
\end{equation}
The corresponding density optimization takes the form
\begin{equation}\label{eq:dual_opt_discount_factor}
    \begin{aligned}
J_d^\star=\mathop {\min }\limits_{\rho_s ,\mathbf{u}} &{\left\langle {\rho_s ,C_\mathbf{u}} \right\rangle}_{\pazocal{X}}\\
\rm{s.t.}&\nabla  \cdot \left( {\rho_s \cdot F(x,\mathbf{u}(x)) } \right) = \phi_+-\kappa\rho_s ,\\
   &\forall x\in \pazocal{X},g(\mathbf{u}(x))\le0 ,\;\rho_s(x)  \ge 0.
\end{aligned}
\end{equation}
The duality between \eqref{eq:primal_discount} and \eqref{eq:dual_opt_discount_factor} is proved in Theorem 1 in \cite{chen2019optimal}, see the proof therein.

Another setup is a fixed horizon optimal control problem. In this case, there is no supply function or stationary density, but instead an initial density of the states $\rho_0$. The following cost function is considered:
\begin{equation*}\label{eq:V_fixed_horizon}
\resizebox{1.0\hsize}{!}{$
  V(t,x_t) = \int_0^T  {C\left( {x\left( \tau  \right),u\left( \tau  \right)} \right)d\tau } +D(x(T)),x(t)=x_t,\dot{x}=F(x,u),
$}
\end{equation*}
where $T$ is fixed.

The primal optimal control problem is the following:
\begin{equation*}\label{eq:primal_opt_fixed_horizon}
    \begin{aligned}
J_p^\star =&{\left\langle {V\left( {0, \cdot } \right),\rho_0} \right\rangle _\pazocal{X}}\\
\rm{s.t.}&\frac{{\partial V}}{{\partial t}} +C_{\mathbf{u}^\star} + \nabla_x V\cdot F_{\mathbf{u}^\star} = 0\\
&{\mathbf{u}^ \star }(t,x) = \mathop {\arg \min }\limits_{u\in\pazocal{U}} C(x,u) + \nabla_x V(t,x) \cdot F(x,u)\\
&V(T,\cdot)=D.
\end{aligned}
\end{equation*}
In this case $\rho:[0,T]\times\pazocal{X}\to\mathbb{R}$ and $\mathbf{u}^\star:[0,T]\times\pazocal{X}\to\mathbb{R}$ depend on both $x$ and $t$, and we can show that the dual problem to this is
\begin{equation*}\label{eq:dual_opt_fixed_horizon}
  \begin{aligned}
\mathop {\min }\limits_{\rho ,\mathbf{u}} \;&{\left\langle {\rho ,C_{\mathbf{u}}} \right\rangle _{[0,T]\times\pazocal{X}}}+{\left\langle {\rho(T,\cdot) ,D} \right\rangle _{\pazocal{X}}}\\
\rm{s.t.}&\frac{{\partial \rho }}{{\partial t}} + \nabla_x  \cdot \left( {\rho \cdot F_\mathbf{u}} \right) = 0,\\
&\rho \left( {0, \cdot } \right) = {\rho _0},\;\rho\ge0\\
\end{aligned}
\end{equation*}
Similar duality result can be proved, and is omitted here.
\subsection{Duality in MDP optimization}\label{sec:duality_MDP}
Similar to the optimal control problem in continuous state and input space, there is duality relationship between the density function and the value function in MDP. We consider two setups of MDP, one with infinite horizon and discounted reward, and one with a sink where states vanish after reaching the sink.

First, we discuss the discounted reward case. Again, we only consider the time invariant case. For the MDP introduced in Section \ref{sec:MDP_intro}, the value function for an individual state $s_0$ is defined as
\[ V(s_0)=\sum\nolimits_{t = 0}^\infty  {{\gamma ^t}{R_{{a(t)}}}\left( {{s(t)},{s(t+1)}} \right)}\;\mathrm{s.t.}\; s(0)=s_0.\]

In the time invariant case, it is assumed that initial states appear at each state following the positive supply $\phi_+:S\to\mathbb{R}$. Recall the solution of the optimal policy for the MDP reviewed in Section \ref{sec:MDP_intro}, the overall reward for an MDP given the positive supply $\phi_+$ is then $\left\langle {{\phi_+ },V} \right\rangle _S$, and the primal optimization is:
\begin{equation}\label{eq:overall_reward_MDP}
\resizebox{1\hsize}{!}{$
\begin{aligned}
  J_p^\star &= {\left\langle {{\phi_+ },V} \right\rangle _S},\\
  \mathrm{s.t.}\;V\left( s \right) &= \sum\limits_{a\in A}\sum\limits_{s'\in S} {\pi(a|s) P_a\left( {s,s'} \right)\left( {{R_a}\left( {s,s'} \right) + \gamma V\left( {s'} \right)} \right)},\\
   \pi(s)& = \mathop {\arg \max }\limits_{\pi(s)\in\Delta_s} \left\{ {\sum\limits_a\pi(a|s)\sum\limits_{s'\in S} {{P_a}\left( s,s' \right)\left( {{R_a}\left( s,s' \right) + \gamma V\left( {s'} \right)} \right)} } \right\}
  \end{aligned}
  $}
\end{equation}

On the density side, the corresponding supply function for this setup is
\[\overline{\phi}(\overline{\rho})=\overline{\phi}_+-(1-\gamma)(P^\pi)^\intercal\overline{\rho}\]
Then the corresponding Liouville equation for density vector is the following:
\begin{equation*}\label{eq:MDP_liouville_discount}
  \overline{\rho}^{t+1} = (P^\pi)^\intercal \overline{\rho}^t + \overline{\phi}_+-(1-\gamma)(P^\pi)^\intercal \overline{\rho}^t=\gamma (P^\pi)^\intercal \overline{\rho}^t+\overline{\phi}_+,
\end{equation*}
The stationary density ${\rho}_s:S\to\mathbb{R}_{\ge 0}$ then satisfies
\[\overline{\rho}_s=\gamma (P^\pi)^\intercal\overline{\rho}_s+\overline{\phi}_+,\]
where $\overline{\rho}_s$ is the vector form of $\rho_s$.
For a $0<\gamma<1$, since $P^\pi$ defined in \eqref{eq:P_pi} is a probability transition matrix, $(I-\gamma (P^\pi)^\intercal)$ is nonsingular, thus the stationary $\overline{\rho}_s=(I-\gamma (P^\pi)^\intercal)^{-1}\overline{\phi}_+$ always exists and is unique.

The dual optimization is then
\begin{equation}\label{eq:MDP_dual}
  \begin{aligned}
J_d^\star=\max\limits_{\rho_s,\pi} &\sum\limits_s {{\rho(s)}\sum\limits_{a}\pi(a|s)\sum\limits_{s'}P_a(s,s') R_a(s,s')} \\
\rm{s.t.}\;&\overline{\rho}_s=\gamma (P^\pi)^\intercal \overline{\rho}_s+\overline{\phi}_+,\forall s,\pi(s)\in\Delta_s.
\end{aligned}
\end{equation}
\begin{thm}\label{thm:MDP_duality}
  The primal problem in \eqref{eq:overall_reward_MDP} and dual problem in \eqref{eq:MDP_dual} are dual to each other with no duality gap.
\end{thm}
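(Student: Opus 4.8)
The plan is to recognize \eqref{eq:overall_reward_MDP} and \eqref{eq:MDP_dual} as the two halves of the classical value-function / occupation-measure linear-programming pair for an MDP, and then invoke strong linear-programming duality. First I would rewrite the primal as a genuine linear program. Since $0<\gamma<1$ and $\overline\phi_+\ge\mathbf{0}$, the Bellman equality together with the $\arg\max$ characterization of $\pi$ in \eqref{eq:overall_reward_MDP} is equivalent to
\begin{equation*}
  \min_{\overline V\in\mathbb{R}^N}\;\langle\overline\phi_+,\overline V\rangle_S\quad\mathrm{s.t.}\quad V(s)\ge\sum_{s'\in S}P_a(s,s')\big(R_a(s,s')+\gamma V(s')\big),\ \forall s\in S,\,a\in A_s.
\end{equation*}
The feasible set here is exactly $\{\overline V:\overline V\ge\mathcal{T}\overline V\}$, where $\mathcal{T}$ is the Bellman optimality operator; monotonicity of $\mathcal{T}$ and its $\gamma$-contraction imply that every feasible $\overline V$ dominates the unique fixed point $\overline V^\star$ componentwise while $\overline V^\star$ is itself feasible, so minimizing the nonnegative-weighted sum $\langle\overline\phi_+,\overline V\rangle_S$ selects $\overline V=\overline V^\star$, and the $\arg\max$ then recovers the optimal policy. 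Hence this LP attains the value $J_p^\star$.

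Next I would take the LP dual. Assigning a multiplier $x_a(s)\ge0$ to the constraint indexed by $(s,a)$ and noting $\overline V$ is free, the dual reads $\max_{x\ge0}\sum_{s,a}x_a(s)\sum_{s'}P_a(s,s')R_a(s,s')$ subject to $\sum_a x_a(s)-\gamma\sum_{s'',a}x_a(s'')P_a(s'',s)=\phi_+(s)$ for all $s$. Setting $\rho(s)\doteq\sum_a x_a(s)$ and $\pi(a|s)\doteq x_a(s)/\rho(s)$ (with $\pi(s)$ taken to be any element of $\Delta_s$ where $\rho(s)=0$), the equality constraint becomes, using the definition \eqref{eq:P_pi} of $P^\pi$, precisely $\overline\rho=\gamma(P^\pi)^\intercal\overline\rho+\overline\phi_+$, and the objective becomes $\sum_s\rho(s)\sum_a\pi(a|s)\sum_{s'}P_a(s,s')R_a(s,s')$; conversely each feasible $(\overline\rho_s,\pi)$ of \eqref{eq:MDP_dual} gives the feasible multiplier $x_a(s)=\rho_s(s)\pi(a|s)\ge0$ with the same objective. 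Therefore \eqref{eq:MDP_dual} and the LP dual share the same optimal value.

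It then remains to verify strong duality. The primal LP is feasible because it contains $\overline V^\star$, and the dual LP is feasible because, as noted just before the theorem, for any fixed $\pi$ the matrix $I-\gamma(P^\pi)^\intercal$ is nonsingular with nonnegative inverse $\sum_{k\ge0}\gamma^k\big((P^\pi)^\intercal\big)^k$, so $\overline\rho_s=(I-\gamma(P^\pi)^\intercal)^{-1}\overline\phi_+\ge\mathbf{0}$ is feasible. Both being feasible finite-dimensional linear programs, strong LP duality gives that the optima are attained and $J_p^\star=J_d^\star$, i.e. no duality gap. I expect the main obstacle to be the first step: rigorously certifying that the nonconvex-looking primal (Bellman equality plus $\arg\max$) and the bilinear-looking dual (the product $\rho_s(s)\pi(a|s)$) really are the LP pair above. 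This rests on the contraction/monotonicity of the Bellman operator, on $\overline\phi_+\ge\mathbf{0}$, and on correctly handling states with zero stationary density, at which $\pi(s)$ is undetermined by the multipliers but affects neither the objective nor the constraints.
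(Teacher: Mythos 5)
Your proposal is correct, but it reaches the conclusion by a genuinely different route than the paper. The paper works directly on the bilinear density problem \eqref{eq:MDP_dual}: it forms the Lagrangian with a multiplier $\mu$ on the stationarity constraint $\overline{\rho}_s=\gamma(P^\pi)^\intercal\overline{\rho}_s+\overline{\phi}_+$, rearranges, identifies $\mu$ with $V$, reads the Bellman fixed-point equation and the $\arg\max$ characterization of $\pi$ off the KKT conditions, and then evaluates the Lagrangian at the optimum to obtain $J_d^\star=\sum_s\phi_+(s)V(s)=J_p^\star$. You instead pass through the classical linear-programming formulation of discounted MDPs: monotonicity and $\gamma$-contraction of the Bellman operator (together with $\overline{\phi}_+\ge\mathbf{0}$) identify the primal \eqref{eq:overall_reward_MDP} with the LP over superharmonic functions, the occupation-measure substitution $x_a(s)=\rho_s(s)\pi(a|s)$ (with the zero-density states handled as you note) identifies \eqref{eq:MDP_dual} with its LP dual, and finite-dimensional strong LP duality with both programs feasible closes the gap. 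What your route buys is rigor on exactly the point the paper treats lightly: \eqref{eq:MDP_dual} is nonconvex in $(\rho_s,\pi)$, so KKT conditions alone do not certify absence of a duality gap, whereas your linearization plus LP strong duality gives attainment and zero gap cleanly; the mild imprecision that the minimizer of $\langle\overline{\phi}_+,\overline V\rangle$ need only agree with $\overline V^\star$ at states where $\phi_+(s)>0$ does not affect the value equality you need. What the paper's route buys is a shorter, self-contained argument that mirrors the continuous-state proof of Theorem \ref{thm:opt_con_duality} and exhibits the multiplier explicitly as the value function, which is the structural point the paper wants to emphasize.
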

\begin{proof}
  Starting with the dual problem in \eqref{eq:MDP_dual}. The Lagrangian is then formulated as
\begin{equation*}
\resizebox{.95\hsize}{!}{$
  \begin{aligned}
\pazocal{L} &= \sum\limits_s \rho_s(s)\sum\limits_a \pi(a|s)\sum\limits_{s'}P_a(s,s')R_a(s,s')\\
&+\sum\limits_s\mu(s)\left(\rho_s(s)-\gamma\sum\limits_{s'}\sum\limits_a\pi(a|s')P_a(s',s)\rho_s(s')+\phi_+(s)\right)
\end{aligned}
$}
\end{equation*}
Rearranging the terms, we get
\begin{equation*}
\begin{aligned}
  \pazocal{L}&=\sum\limits_s\rho_s(s)\left(
    \begin{array}{c}
    \sum\limits_a\pi(a|s)\sum\limits_{s'}P_a(s,s')R_a(s,s') \\
    +\mu(s)-\gamma\sum\limits_a\sum\limits_{s'}\pi(a|s)P_a(s,s')\mu(s')
  \end{array}
  \right)\\
  &+\sum\limits_s \mu(s) \phi_+(s)
  \end{aligned}
\end{equation*}
Replacing $\mu$ with $V$, the KKT condition implies
\begin{equation*}
\resizebox{1\hsize}{!}{$
\begin{aligned}
{V(s)}& = \sum\limits_a\pi(a|s)\sum\limits_{s'}(P_a(s,s')R_a(s,s')+\gamma V(s')) \\
\pi(s)& = \mathop {\arg \max }\limits_{\pi(s)\in\Delta_s} \left\{ {\sum\limits_{a}\pi(a|s)\sum\limits_{s'\in S} {{P_a}\left( s,s' \right)\left( {{R_a}\left( s,s' \right) + \gamma V(s')} \right)} } \right\},
\end{aligned}
$}
\end{equation*}
which is the optimality condition for the value function, and it's easy to check that there is no duality gap, i.e.,
\begin{equation*}
  J_d^\star=\mathop {\min }\limits_{\rho,\pi}  \pazocal{L} = \sum\limits_s {\phi_+(s) V(s)}=J_p^\star.
\end{equation*}
\end{proof}
On top of the first setup, the second setup adds a sink set $S_-$, where the state vanishes once inside $S_-$. This corresponds to the situation where the goal is to travel to the goal state (the sink $S_-$), and the state vanishes once it arrives at the goal state.
The primal optimization is
\begin{equation}\label{eq:primal_MDP_sink}
  \begin{aligned}
  J_p^\star &= {\left\langle {{\phi_+ },V} \right\rangle _S},\\
  \mathrm{s.t.}\;&\forall s\in S_-,V(s)=0,\forall s\notin S_-,\\
   V\left( s \right) &= \sum\limits_{a\in A}\sum\limits_{s'\in S} {\pi(a|s) P_a\left( {s,s'} \right)\left( {{R_a}\left( {s,s'} \right) +  \gamma V\left( {s'} \right)} \right)},\\
   \pi(s)& = \mathop {\arg \max }\limits_{\pi(s)\in\Delta_s} \left\{ {\sum\limits_{s'\in S} {{P_a}\left( s,s' \right)\left( {{R_a}\left( s,s' \right) + V\left( {s'} \right)} \right)} } \right\}
  \end{aligned}
\end{equation}
In this case, the supply function is defined as
\begin{equation}\label{eq:supply_sink}
  \overline{\phi}(\overline{\rho})=\overline{\phi}_+ -((1-\gamma)I+\gamma\mathbf{diag}(\mathds{1}_{S_-}))(P^\pi)^\intercal\overline{\rho},
\end{equation}
where $\mathbf{diag}(\mathds{1}_{S_-})\in \mathbb{R}^{N\times N}$ is a diagonal matrix whose diagonal vector is $\mathds{1}_{S_-}\in \mathbb{R}^N$,
\[\mathds{1}_{S_-}(i)=\left\{\begin{aligned}&1,&s_i\in S_-\\
&0,&s_i\notin S_-
\end{aligned}\right..
\]
Given $P^\pi$, define the cropped transition probability matrix $\tilde{P}^\pi$ by modifying the columns of $P^\pi$ corresponding to the states in $S_-$ to all zeros. Then it is easy to check that
\[P^\pi-((1-\gamma)I+\gamma\mathbf{diag}(\mathds{1}_{S_-}))P^\pi=\tilde{P}^\pi.\]
Then the Liouville equation in this case is simply
\begin{equation*}
  \overline{\rho}^{t+1}=\gamma(\tilde{P}^\pi)^\intercal \overline{\rho}^t+\overline{\phi}_+.
\end{equation*}
The dual optimization is then
\begin{equation}\label{eq:MDP_dual_sink}
  \begin{aligned}
J_d^\star=\max\limits_{\rho_s,\pi} &\sum\limits_{s\notin S_-} {{\rho_s(s)}\sum\limits_{a}\pi(a|s)\sum\limits_{s'}P_a(s,s') R_a(s,s')} \\
\rm{s.t.}\;&\overline{\rho}_s= \gamma(\tilde{P}^\pi)^\intercal \overline{\rho}_s+\overline{\phi}_+,\forall s\notin S_-,\pi(s)\in\Delta_s.
\end{aligned}
\end{equation}
\begin{thm}
  When the optimal solution to the primal optimization in \eqref{eq:primal_MDP_sink} and the dual optimization in \eqref{eq:MDP_dual_sink} exist and are finite, the two problems are dual to each other, and there is no duality gap.
\end{thm}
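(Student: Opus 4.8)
The plan is to follow the Lagrangian/KKT route of the proof of Theorem~\ref{thm:MDP_duality}, with $P^\pi$ replaced by the cropped matrix $\tilde P^\pi$ and careful tracking of its zeroed columns. Starting from the dual program \eqref{eq:MDP_dual_sink}, I keep the per-state simplex constraints $\pi(s)\in\Delta_s$ inside the feasible set and attach a multiplier $\mu:S\to\mathbb{R}$ to the equality constraint $\overline{\rho}_s=\gamma(\tilde P^\pi)^\intercal\overline{\rho}_s+\overline{\phi}_+$ and a multiplier $\lambda_0:S\to\mathbb{R}_{\ge0}$ to $\rho_s\ge0$. Expanding $(\tilde P^\pi)^\intercal$ via \eqref{eq:P_pi} and the definition of $\tilde P^\pi$ (its $(s',s)$ entry is $\sum_a\pi(a\mid s')P_a(s',s)$ for $s\notin S_-$ and $0$ for $s\in S_-$), the Lagrangian is, for a fixed policy, affine in each coordinate $\rho_s(s)$, just as in the discounted case.

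Next I rearrange $\pazocal{L}$ by swapping the order of summation in the bilinear term $\sum_s\mu(s)\sum_{s'}(\tilde P^\pi)_{s',s}\rho_s(s')$ and relabeling indices, collecting for each $s$ the coefficient $c(s)$ of $\rho_s(s)$: the one-step term $\mathds{1}_{\{s\notin S_-\}}\sum_a\pi(a\mid s)\sum_{s'}P_a(s,s')R_a(s,s')$, plus $\mu(s)$, minus $\gamma\sum_a\pi(a\mid s)\sum_{s'}P_a(s,s')\mathds{1}_{\{s'\notin S_-\}}\mu(s')$, plus $\lambda_0(s)$, so that (following the sign convention of Theorem~\ref{thm:MDP_duality})
\begin{equation*}
\pazocal{L}=\sum_{s}\rho_s(s)\,c(s)+\sum_{s}\mu(s)\phi_+(s).
\end{equation*}
Stationarity $\partial\pazocal{L}/\partial\rho_s(s)=0$ with complementary slackness $\lambda_0(s)\rho_s(s)=0$ forces $c(s)=0$ wherever $\rho_s(s)>0$. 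For $s\in S_-$ the one-step term is absent (the dual objective sums only over $s\notin S_-$) and, since the $s$-th row of $(\tilde P^\pi)^\intercal$ vanishes, $\rho_s(s)=\phi_+(s)$ is pinned; the bracket therefore reduces to $\mu(s)=0$ there, and $\mu$ may be set to $0$ on the rest of $S_-$, which enters $\pazocal{L}$ only through $\sum_{s\in S_-}\mu(s)\phi_+(s)$. Identifying $\mu$ with $V$, this is precisely the primal boundary condition $V\equiv0$ on $S_-$; with $V\equiv0$ on $S_-$ one has $\sum_{s'}P_a(s,s')\mathds{1}_{\{s'\notin S_-\}}V(s')=\sum_{s'}P_a(s,s')V(s')$, so for $s\notin S_-$ the condition $c(s)=0$ becomes the Bellman fixed-point equation of \eqref{eq:primal_MDP_sink}, while stationarity in $\pi(s)$ over $\Delta_s$ (again only where $\rho_s(s)>0$) gives the $\arg\max$ characterization of $\pi(s)$. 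Substituting the KKT point back into $\pazocal{L}$, the $c(s)$-term vanishes on the support and $\rho_s=0$ off it, leaving $\pazocal{L}=\sum_s\phi_+(s)V(s)=\langle\phi_+,V\rangle_S=J_p^\star=J_d^\star$; the assumed existence and finiteness of the optima is what justifies applying the KKT conditions.

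The main obstacle is the bookkeeping around the sink set: one must verify that the zeroed columns of $\tilde P^\pi$ are exactly offset by the boundary condition $V\equiv0$ on $S_-$, and that this boundary condition is forced by the KKT system rather than being imposed as an extra hypothesis. The clean way is the observation above that contracting $\tilde P^\pi$ against a $\mu$ supported off $S_-$ equals contracting the full $P^\pi$ against it, which collapses the computation to the one already carried out for Theorem~\ref{thm:MDP_duality}. A secondary subtlety, shared with the earlier duality results (cf.\ Theorem~\ref{thm:opt_con_duality}), is that the Bellman equation and the policy $\arg\max$ are only pinned down on $\{\,\rho_s(s)>0\,\}$; on unreached states the policy is arbitrary, and one argues as there that this affects neither the value of $\pazocal{L}$ nor the optimal cost.
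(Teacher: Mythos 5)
Your proposal is correct and takes essentially the same route as the paper, which omits this proof and states only that it parallels Theorem \ref{thm:MDP_duality}: you carry out exactly that Lagrangian/KKT adaptation with the cropped matrix $\tilde P^\pi$. Your key observation—that contracting $\tilde P^\pi$ against a multiplier vanishing on $S_-$ coincides with contracting the full $P^\pi$, so the sink bookkeeping collapses to the discounted-case computation and yields $V\equiv 0$ on $S_-$, the Bellman/argmax conditions off $S_-$, and $J_d^\star=\left\langle \phi_+,V\right\rangle_S=J_p^\star$—is precisely what makes the analogy rigorous.
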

The proof is similar to the proof of Theorem \ref{thm:MDP_duality}, and is omitted here.
\section{Constrained optimal control for dynamic systems}\label{sec:constrained_optimal_control}
With density functions, it is convenient to pose some constrained optimal control problems that are hard to pose with value function. Here we list a few.
\begin{itemize}
  \item In an optimal control problem, preventing the state to enter the danger area $\pazocal{X}_d$,
  \item In a robot navigation problem solved as a finite-horizon optimal control problem, enforcing a lower bound on the portion of states that reach the destination at the end of the horizon,
  \item In the macroscopic traffic assignment problem, enforcing upper bounds on road sections to prevent congestion.
\end{itemize}
All of the above mentioned problems can be posed as constrained optimizations on the density function, and we present a primal-dual algorithm to solve them.

We will show two examples, the first one is the optimal control problem with a terminal condition, and the constraint is that the states never enter a danger area $\pazocal{X}_d$. The second example is an MDP with upper bounds on the density of some states.
\subsection{Optimal control with safety constraint}
For the optimal control problem with a terminal condition, the unconstrained version is reviewed in Section \ref{sec:opt_con_review} and we present the duality result in Section \ref{sec:opt_control_destination}. Although the safety constraint is hard to impose on the value function, it is very convenient to impose it on the density formulation. The constrained optimization of density function is
\begin{equation}\label{eq:constrained_opt_con_dual}
  \begin{aligned}
\mathop {\min }\limits_{\rho_s ,\mathbf{u}} &{\left\langle {\rho_s ,C_\mathbf{u}} \right\rangle}_{\pazocal{X}}-{\left\langle {\phi_- ,D} \right\rangle}_\pazocal{X}\\
\rm{s.t.}&\nabla  \cdot \left( {\rho_s \cdot F_\mathbf{u}) } \right) = \phi ,\\
   &\forall x\in\pazocal{X},g(\mathbf{u}(x))\le0 ,\;\rho_s(x)  \ge 0\\
   &\rho _s\mid_{\pazocal{X}_d} \le \rho^{\max},
\end{aligned}
\end{equation}
where $\rho^{\max}$ is the tolerance, and it takes the value 0 if the constraint is absolute.

This optimization on density function is an infinite dimensional linear programming with equality constraint, which is hard to solve. However, one can use a primal-dual algorithm and solve the primal value function problem instead. With this extra safety constraint, the Lagrangian becomes
\begin{equation}\label{eq:constrained_lag}
  \begin{array}{c}
  \pazocal{L} = {\left\langle {{\rho _s},C_{\mathbf{u}}} \right\rangle}_\pazocal{X} - {\left\langle {{\phi _ - },D} \right\rangle _\pazocal{X}} + {\left\langle {\mu ,\phi  - \nabla  \cdot \left( {{\rho _s} \cdot F_{\mathbf{u}}} \right)} \right\rangle _\pazocal{X}} \\
  - {\left\langle {{\lambda_0}, - {\rho _s}} \right\rangle _\pazocal{X}}-{\left\langle {{\lambda_1}, g\circ\mathbf{u}} \right\rangle _\pazocal{X}}+{\left\langle \rho_s-\rho^{\max}, \sigma\mathds{1}_{\pazocal{X}_d} \right\rangle _\pazocal{X}},
\end{array}
\end{equation}
where $\circ$ denotes function composition and $\sigma:\pazocal{X}\to\mathbb{R}_+$ is the Lagrange multiplier associated with the safety constraint. By Theorem \ref{thm:opt_con_duality}, when fixing the dual variable $\sigma$, the primal problem becomes
\begin{equation}\label{eq:constrained_primal_opt}
  \begin{aligned}
{J_p } = &{\left\langle {{\phi _ + },{V^{{{\mathbf{u}}^ \star }}}} \right\rangle _\pazocal{X}}\\
\rm{s.t.}\;&{\mathbf{u} }(x) = \mathop {\arg \min }\limits_{u \in \pazocal{U}} \;\nabla V \cdot F(x,u)+C(x,u)\\
&C_{\mathbf{u}} + \sigma\mathds{1}_{\pazocal{X}_d}+\nabla V \cdot F_{\mathbf{u}} = 0\\
&V{\mid _{{\pazocal{X}_g}}} = D.
\end{aligned}
\end{equation}
The only difference comparing the unconstrained case is the perturbation term $\sigma$ on the running cost within $\pazocal{X}_d$. Therefore, a primal-dual algorithm for the constrained optimal control can be formulated by alternating between solving the primal problem and updating the dual variable.
\begin{algorithm}[H]
    \caption{Primal-dual algorithm for optimal control with safety constraint}
    \label{alg:primal_dual_opt_con}
    \begin{algorithmic}[1] 
            \State  $\sigma[0] \gets \mathbf{0}$, $j=0$
            \Do
                \State Solve \eqref{eq:constrained_primal_opt} with $\sigma[j]$, get $\mathbf{u}^\star$.
                \State Compute stationary density $\rho_s$ under $\mathbf{u}^\star$.
                \State ${\sigma [j+1]} \gets \max\left\{\mathbf{0},{\sigma[j]} + \alpha \left( {(\rho _s-\rho^{\max})   \mathds{1}_{\pazocal{X}_d}} \right)\right\}$.
                \State $j\gets j+1$
            \doWhile { $\left(\begin{array}{l}
                       \left\langle {\sigma[j],\max(\mathbf{0},\rho _s-\rho^{\max})\mathds{1}_{\pazocal{X}_d}} \right\rangle > \epsilon  \\
                       \textbf{or} \left\|\rho_s\mathds{1}_{\pazocal{X}_d}\right\|_\infty>\rho^{\max}
                     \end{array}\right)$ }
            \State \textbf{return} $\mathbf{u}^\star,\rho_s,V$
    \end{algorithmic}
\end{algorithm}
Algorithm \ref{alg:primal_dual_opt_con} shows the primal-dual algorithm for constrained optimal control, where $\alpha>0$ is the step size and $\epsilon>0$ is the tolerance on the complementary slackness condition. The algorithm iterates between the primal value function problem, which solves for the optimal value function and control strategy under the perturbation $\sigma$, and the dual problem, which computes the stationary density function and updates the dual variable $\sigma$. It terminates if a feasible solution that is close enough to the optimal solution (assessed by the KKT condition) is found.
\subsection{Extension to systems with disturbance}
The primal-dual algorithm can be extended to solve constrained optimal control for systems with disturbance.

Consider the following dynamic system:
\begin{equation*}\label{eq:dyn_dist}
  \dot{x}=F(x,u,d),x\in\pazocal{X}=\mathbb{R}^n,u\in\pazocal{U}\subseteq\mathbb{R}^m,d\in\pazocal{D}\subseteq\mathbb{R}^p,
\end{equation*}
where $x$, $u$, and $d$ are the state, control input, and disturbance input of the system. We would like to solve the optimal control problem with the same state constraint as in \eqref{eq:constrained_opt_con_dual} under all possible disturbances. First, consider the worst-case disturbance for a given controller $\mathbf{u}$.

\begin{prop}
  For a fixed state-feedback controller $\mathbf{u}:\pazocal{X}\to\pazocal{U}$, consider the following optimal control problem:
  \begin{equation}\label{eq:worst_d_opt}
  \begin{aligned}
\mathop {\max }\limits_{d(\cdot)} &\int_0^T  {{\mathds{1}_{{\pazocal{X}_d}}(x(t))}dt} \\
\mathrm{s.t.}&~~\dot x(t) = F\left( {x(t),\mathbf{u}(x(t)),d(t)} \right),
\end{aligned}
\end{equation}
where $T$ is the first time that $x$ reaches $\pazocal{X}_g$. Then the state will not enter the danger set $\pazocal{X}_d$ iff the optimal value of \eqref{eq:worst_d_opt} is zero, and the worst-case disturbance is a function of the state only, i.e., $d^\star(t)=\mathbf{d}^\star(x(t))$.
\end{prop}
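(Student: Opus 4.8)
The plan is to treat the two assertions of the proposition separately. The first, that robust safety is equivalent to the optimal value of \eqref{eq:worst_d_opt} being zero, follows immediately from the sign structure of the cost. The second, that the worst-case disturbance has a pure state-feedback representation, follows from Bellman's principle of optimality applied to \eqref{eq:worst_d_opt} with the disturbance $d$ in the role of the control and the feedback $\mathbf{u}$ held fixed, exactly mirroring the primal HJB derivation of $\mathbf{u}^\star$ in Section \ref{sec:opt_con_review}.

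For the first claim I would argue as follows. The integrand $\mathds{1}_{\pazocal{X}_d}(x(t))$ is nonnegative, so for every admissible $d(\cdot)$ the objective of \eqref{eq:worst_d_opt} is nonnegative, and hence so is its supremum. If the optimal value is zero, then $\int_0^T \mathds{1}_{\pazocal{X}_d}(x(t))\,dt = 0$ for every admissible disturbance, i.e. the closed-loop trajectory spends zero Lebesgue time in $\pazocal{X}_d$ under any disturbance; conversely, if no admissible disturbance ever drives $x$ into $\pazocal{X}_d$, then $\mathds{1}_{\pazocal{X}_d}(x(t))\equiv 0$ along every trajectory and the objective vanishes identically, so the optimal value is zero. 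Here ``enter $\pazocal{X}_d$'' is read as ``spend positive time in $\pazocal{X}_d$''; when $\pazocal{X}_d$ is the closure of an open set this agrees with the literal notion up to an instantaneous grazing contact, which is ruled out under mild regularity of $F$ and $\partial\pazocal{X}_d$. Well-posedness of this step only needs $T$, the first hitting time of $\pazocal{X}_g$, to be finite for every admissible $d(\cdot)$, which I take as the robust counterpart of Assumption \ref{ass:inv}.

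For the second claim I would introduce the value function of the inner problem,
\begin{equation*}
  W(x) = \max_{d(\cdot)} \int_0^{T} \mathds{1}_{\pazocal{X}_d}(x(\tau))\,d\tau,\qquad x(0)=x,\ \dot x = F(x,\mathbf{u}(x),d),
\end{equation*}
with $T$ the first time $x(\cdot)$ reaches $\pazocal{X}_g$, so that $W|_{\pazocal{X}_g}=0$. Since the closed-loop dynamics $F(\cdot,\mathbf{u}(\cdot),\cdot)$, the running cost $\mathds{1}_{\pazocal{X}_d}$, and the target $\pazocal{X}_g$ are all time-invariant, Bellman's principle of optimality gives that $W$ depends on the state only and, wherever it is differentiable, solves the stationary Hamilton--Jacobi equation
\begin{equation*}
  \max_{d\in\pazocal{D}} \left\{ \nabla W(x)\cdot F(x,\mathbf{u}(x),d) + \mathds{1}_{\pazocal{X}_d}(x) \right\} = 0,\qquad x\notin\pazocal{X}_g .
\end{equation*}
The worst-case disturbance is then the pointwise maximizer
\begin{equation*}
  \mathbf{d}^\star(x) = \mathop {\arg \max }\limits_{d\in\pazocal{D}} \ \nabla W(x)\cdot F(x,\mathbf{u}(x),d),
\end{equation*}
which is a function of $x$ alone; executing $d^\star(t)=\mathbf{d}^\star(x(t))$ along the resulting trajectory attains $W(x(0))$ and is therefore optimal for \eqref{eq:worst_d_opt}.

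The main obstacle is technical rather than conceptual: the running cost $\mathds{1}_{\pazocal{X}_d}$ is discontinuous, so $W$ need not be $C^1$ and the Hamilton--Jacobi equation above must be read in the viscosity (or measurable-feedback) sense, and the exit-time formulation presupposes that $\pazocal{X}_g$ is reachable under every admissible disturbance so that $T<\infty$. Both are covered by standard results on optimal control with exit time and discontinuous data; for the purposes of this paper it suffices to invoke Bellman's principle of optimality just as in the primal HJB derivation, together with compactness of $\pazocal{D}$ and continuity of $F$ in $d$ to ensure the $\arg\max$ defining $\mathbf{d}^\star$ is attained.
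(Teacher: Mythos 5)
Your proposal is correct and follows essentially the same route as the paper: the equivalence with zero optimal value comes from the nonnegativity of the indicator cost, and the state-feedback form of $\mathbf{d}^\star$ comes from Bellman's principle of optimality applied to the (time-invariant, Bolza/exit-time) inner problem. You are in fact more careful than the paper, which states the argument in two sentences without spelling out the value function $W$, the measure-zero ``grazing'' caveat, or the viscosity-solution issue for the discontinuous running cost.
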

\begin{proof}
  The optimization in \eqref{eq:worst_d_opt} is essentially an optimal control problem in the Bolza form. Therefore, by the principle of optimality, for a single state $x$, the optimal strategy for the disturbance should be deterministic, i.e., we can parameterize the optimal (worst-case) $d^\star$ as $d(t)=\mathbf{d}^\star(x(t))$. When the optimal value of \eqref{eq:worst_d_opt} is zero, by the definition of the cost function, the state never enters $\pazocal{X}_d$, otherwise the state enters $\pazocal{X}_d$ under the worst-case disturbance.
\end{proof}

To emphasize the dependence of $\mathbf{d}^\star$ on $\mathbf{u}$, we denote $\mathbf{d}^\star[\mathbf{u}]$ as the worst case disturbance under control strategy $\mathbf{u}$.
By the duality result in Theorem \ref{thm:opt_con_duality}, \eqref{eq:worst_d_opt} can be rewritten as a density optimization. Combining this with \eqref{eq:constrained_opt_con_dual}, the robust density function optimization takes the following form:

\begin{equation}\label{eq:density_safety_opt1}
\begin{aligned}
\mathop {\min }\limits_{\bf{u},{\rho _s}}& \left\langle {{C_\mathbf{u}},{\rho _s}} \right\rangle - \left\langle {D,\phi_-} \right\rangle  \\
\mathrm{s.t.} &\left\{\begin{aligned}\mathop {\max }\limits_{\bf{d}} &\left\langle {\mathds{1}_{\pazocal{X}_d},{\rho _s}} \right\rangle  \\
\mathrm{s.t.}& \nabla  \cdot \left( {{\rho _s} \cdot F_\mathbf{u,d}} \right) = \phi,\mathbf{d}(x)\in\pazocal{D}\end{aligned}\right\}\le0,\\
&\forall x \in {\pazocal{X}},{\bf{u}}(x) \in {\pazocal{U}},
\end{aligned}
\end{equation}
where $F_{\mathbf{u},\mathbf{d}}\doteq F(x,\mathbf{u}(x),\mathbf{d}(x))$. The optimization in \eqref{eq:density_safety_opt1} solves for a controller $\mathbf{u}$ that robustly satisfies the safety constraint under the worst-case disturbance and optimizes the cost under the worst-case disturbance. Of course, when the disturbance changes, the controller is not guaranteed to be optimal, but the satisfaction of the safety constraint is still guaranteed.

To solve \eqref{eq:density_safety_opt1}, we simply need to add an additional step to the primal-dual algorithm to solve for the worst-case disturbance under the control strategy $\mathbf{u}$, which is illustrated in \eqref{eq:worst_d_opt}. The primal optimal control problem is simply modified from \eqref{eq:constrained_primal_opt} by adding the disturbance strategy in:
\begin{equation}\label{eq:constrained_pert_primal_opt}
  \begin{aligned}
{J_p } = &{\left\langle {{\phi _ + },{V^{{{\mathbf{u}}^ \star }}}} \right\rangle _\pazocal{X}}\\
\rm{s.t.}\;&{\mathbf{u} }(x) = \mathop {\arg \min }\limits_{u \in \pazocal{U}} \;\nabla V \cdot F(x,u,\mathbf{d}(x))+C(x,u)\\
&C_{\mathbf{u}} + \sigma\mathds{1}_{\pazocal{X}_d}+\nabla V \cdot F_{\mathbf{u},\mathbf{d}} = 0\\
&V{\mid _{{\pazocal{X}_g}}} = D.
\end{aligned}
\end{equation}
The primal-dual algorithm for systems with disturbance is shown as Algorithm \ref{alg:primal_dual_opt_con_robust}. For details, see \cite{chen2019optimal}.
\begin{algorithm}
    \caption{Primal-dual algorithm for robust safe control synthesis}
    \label{alg:primal_dual_opt_con_robust}
    \begin{algorithmic}[1] 
            \State  $\sigma[0] \gets \mathbf{0}$, $j=0$
            \Do
                \State Solve \eqref{eq:constrained_pert_primal_opt} with $\sigma[j]$, get $\mathbf{u}^\star$.
                \State Solve \eqref{eq:worst_d_opt} with $\mathbf{u}^\star$ to get the worst case $\mathbf{d}^\star[\mathbf{u}^\star]$
                \State Compute stationary density $\rho_s$ under $\mathbf{u}^\star$ and $\mathbf{d}^\star[\mathbf{u}^\star]$.
                \State ${\sigma [j+1]} \gets \max\left\{\mathbf{0},{\sigma[j]} + \alpha \left( {\rho _s  \mathds{1}_{\pazocal{X}_d}} \right)\right\}$.
                \State $j\gets j+1$
            \doWhile { $\left(\begin{array}{l}
                       \left\langle {\sigma[j],\max(\mathbf{0},\rho _s-\rho^{\max})\mathds{1}_{\pazocal{X}_d}} \right\rangle > \epsilon  \\
                       \textbf{or} \left\|\rho_s\mathds{1}_{\pazocal{X}_d}\right\|_\infty>\rho^{\max}
                     \end{array}\right)$ }
            \State \textbf{return} $\mathbf{u}^\star,\mathbf{d}^\star[\mathbf{u}^\star],\rho,V$

    \end{algorithmic}
\end{algorithm}
Note that the worst-case disturbance $\mathbf{d}^\star[\mathbf{u}^\star]$ is computed after the controller $\mathbf{u}^\star$, this ensures that the solution is robustly safe.

\subsection{Computation method}
The implementation of the primal-dual algorithm requires solving the HJB PDE and evaluate the density function multiple times. Since the state space and input space are continuous, the computation of the value function optimization and density function are done by Finite Element Method (FEM), i.e., gridding the state space.

\newsec{Implicit method for solving HJB PDE.}
 The standard way to solve the HJB PDE is by numerically integrate the PDE in FEM fashion, an alternative is to directly solve for the value function in an implicit fashion by solving linear equations, which leads to significant speed-up in some cases.

To be specific, let $x_g$ denote the array of grid points and $V_g$ denote the array of $V$ values on $x_g$. The partial differentials become differences between neighbors in $V_g$. We follow the upwind scheme to improve the stability of the FDM computation\cite{sun2015convergence}. Let $I=(i_1,...,i_n)$ be the index of a grid point, and $I_{-,k}=(i_1,...,i_k-1,...,i_n)$, $I_{+,k}=(i_1,...,i_k+1,...,i_n)$ are the indices of the two neighbors on the $k$-th dimension. Then define the backward and forward differences as
\begin{equation}\label{eq:upwind_deriv}
  \nabla^+ V(x_g^I)=\begin{bmatrix}
                      \frac{V_g^{I}-V_g^{I_{-,1}}}{\Delta x_1} \\
                      ... \\
                      \frac{V_g^{I}-V_g^{I_{-,n}}}{\Delta x_n}
                    \end{bmatrix}, \nabla^- V(x_g^I)=\begin{bmatrix}
                      \frac{V_g^{I_{+,1}}-V_g^{I}}{\Delta x_1} \\
                      ... \\
                      \frac{V_g^{I_{+,n}}-V_g^{I}}{\Delta x_n}
                    \end{bmatrix}
\end{equation}
When $I_{+,k}$ or $I_{-,k}$ are out of bounds of the grid on the boundary, the corresponding entries of the backward and forward differences are simply zero. Then under the upwind scheme, $\nabla V\cdot f(x_g^I)$ is computed as
\begin{equation}\label{eq:upwind}
\resizebox{1\hsize}{!}{$
  \nabla V\cdot f(x_g^I) \approx \max\{\mathbf{0},f(x_g^I)\}\cdot\nabla^+ V(x_g^I) + \min\{\mathbf{0},f(x_g^I)\}\cdot\nabla^- V(x_g^I),
  $}
\end{equation}
where the $\max$ and $\min$ are taken entry-wise.

Take the HJB PDE in \eqref{eq:constrained_primal_opt} as an example. When fixing the control strategy $\mathbf{u}$, the HJB PDE is turned into a difference equation of $V_g$:
 \begin{equation}\label{eq:implciit_HJB}
 \begin{aligned}
   &C_{\mathbf{u}}(x_g)+\sigma\mathds{1}_{\pazocal{X}_d}(x_g)+\nabla V_g F_{\mathbf{u}}(x_g)=0,&x&\notin\pazocal{X}_g\\
   &V_g(x_g) = D(x_g),&x&\in\pazocal{X}_g,
   \end{aligned}
 \end{equation}
 which is merely a linear algebraic equation of $V_g$, and is solved with indirect method (direct method is too slow and requires too much memory) such as first order gradient descent.
 Once $V_g$ is solved, the control input at each grid point $x_g^I$ can be evaluated as
 \begin{equation}\label{eq:xg_input}
   \mathbf{u}(x_g^I)  = \mathop {\arg \min }\limits_{u \in \pazocal{U}} \;\nabla V(x_g) \cdot F(x_g^I,u)+C(x_g^I,u),
 \end{equation}
 and we use linear interpolation between grid points to get $\mathbf{u}(x)$.
 The algorithm alternates between solving \eqref{eq:implciit_HJB} for $V_g$  and updating the controller $\mathbf{u}$ based on \eqref{eq:xg_input} until convergence.

\newsec{Kernel density estimation for density function estimation.}
The standard method for computing the density function is the two-step ODE procedure, presented in Section \ref{sec:density_review}, which gives accurate evaluation of the density function independent of the grid size. However, as the dimension of the state space gets high, the number of ODEs to be solved grows exponentially with the state dimension. An alternative is to approximate the density function using the kernel density estimation technique, which simulates the system for multiple trials and evaluate the density function with the samples collected via a kernel. The complexity is proportional to the number of trials instead of exponential w.r.t. the state dimension. An additional advantage is that it only needs a simulator instead of requiring an explicit model of the dynamic system.

Take the finite horizon case disucssed in Section \ref{sec:opt_control_destination} as an example. Given $\phi_+$, we sample $N$ initial conditions $x_0$ according to $\phi_+$ and for each $x_0$, simulate the system under $\mathbf{u}$, resulting in state sequence $\left\{x(t)\right\},t=0,t_s,2t_s,..., M t_s$, where $M$ should be large enough so that the state reaches $\pazocal{X}_g$ before $M t_s$; Each sample outside $\pazocal{X}_g$ then bears mass
\begin{equation}\label{eq:mass}
  m(x(t))= \frac{t_s \Phi_+}{N},
\end{equation}
where $\Phi_+=\int_{\pazocal{X}}{\phi_+(x) dx}$ denotes the total supply rate. To get the density at $x$, one simply compute
\begin{equation}\label{eq:kernel_density}
  \hat{\rho}_s(x)=\sum\limits_{i=1}^{N}\sum\limits_{j=1}^{M} K_h(x-x^i(j t_s)) m(x^i(j t_s)),
\end{equation}
where $K_h$ is the kernel with bandwidth $h$ and $x^i(j t_s)$ denotes the $j$-th state sample in the $i$-th trial. Here we choose the Epanechnikov kernel, which is defined as
\begin{equation}\label{eq:Epanechnikov}
  K_h(s) = \left\{ {\begin{aligned}
&\frac{3}{{4h}}\left( {1 - \frac{{{s^2}}}{{{h^2}}}} \right),&\left|s\right|\le h\\
&0&\left|s\right|> h
\end{aligned}} \right..
\end{equation}
It satisfies the requirement for a kernel, i.e., $\int_{-\infty}^\infty K_h(s)ds=1$ for all $h>0$, as shown in Fig. \ref{fig:Epan}.
\begin{figure}
  \centering
  \includegraphics[width=0.7\columnwidth]{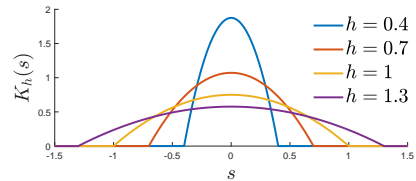}
  \caption{Epanechnikov kernel}\label{fig:Epan}
  \vspace{-0.8cm}
\end{figure}
 For $s\in\mathbb{R}^n$, one can simply take the product of $n$ Epanechnikov kernels to get an $n$-dim kernel:
\begin{equation*}
  K_\mathbf{h}(s)=\prod\nolimits_{i=1}^n K_{h_i}(s_i),
\end{equation*}
where $\mathbf{h}\in\mathbb{R}_{>0}^n$ is the bandwidth vector.
 Moreover, the Epanechnikov kernel is a kernel with finite support, which implies
 \begin{equation*}
   \forall s\notin \left\{s\mid \left|s_i\right|<h_i\right\},K_\mathbf{h}(s)=0.
 \end{equation*}
 This is particularly useful for density function estimation for safe control synthesis, since it gives a compact neighborhood of the query point $x$ such that any sample outside the neighborhood will not affect the density estimation of $x$. Suppose we use a kernel with infinite support, such as the Gaussian kernel, the density inside the danger set would never be zero and the primal-dual algorithm would not be implementable.

See \cite{densityexper2019} for more detailed analysis of the kernel density estimation.

\section{Optimal policy for MDPs with state density constraint}\label{sec:MDP_constrained}
Similar to the constrained optimal control for dynamic systems, the optimal policy for MDPs with state density constraint can be solved with the duality relationship as well. We propose a primal-dual algorithm that can not only solve the constrained optimal strategy for an MDP with density constraint, but also extends to multiple MDPs with aggregate density constraint.
\subsection{Constrained optimization with a single MDP}

The following constrained MDP optimization in the dual form is considered:
\begin{equation}\label{eq:dual_constrained_MDP}
  \begin{aligned}
J_d^\star=\max\limits_{\rho_s,\pi} &\sum\limits_s {{\rho_s(s)}\sum\limits_{a}\pi(a|s)\sum\limits_{s'}P_a(s,s') R_a(s,s')} \\
\rm{s.t.}\;&\overline{\rho}_s=\gamma (P^\pi)^\intercal \overline{\rho}_s+\overline{\phi}_+,\forall s,\pi(s)\in\Delta_s,\overline{\rho}_s\le\rho^{\max},
\end{aligned}
\end{equation}
where $\rho^{\max}\in\mathbb{R}^N$ is the upper bound of the density. Here we only consider an upper bound, extending this formulation to include a lower bound on the density is straightforward.
\begin{rem}
  We present the case without a sink, since the proposed method can handle the case with a sink by simply modifying the transition probability matrix $P^\pi$ to $\tilde{P}^\pi$, as discussed in Section \ref{sec:duality_MDP}.
\end{rem}
Similar to the optimal control case, we start from the Lagrangian. With the density constraint, the Lagrangian becomes
\begin{equation}\label{eq:MDP_lagrangian}
\begin{aligned}
  \pazocal{L}&=\sum\limits_s\rho_s(s)\left(
    \begin{array}{c}
    \sum\limits_a\pi(a|s)\sum\limits_{s'}P_a(s,s')R_a(s,s')-V(s) \\
    +\gamma\sum\limits_a\sum\limits_{s'}\pi(a|s)P_a(s,s')V(s')+\sigma(s)
  \end{array}
  \right)\\
  &+\sum\limits_s V(s) \phi_+(s),
\end{aligned}
\end{equation}
where $\sigma:S\to\mathbb{R}_{\ge0}$ is the lagrange multiplier corresponding to the constraint, and we let $\overline{\sigma}=[\sigma(s_1),...,\sigma(s_N)]^\intercal$ be its vector form.
When fixing $\sigma$, the primal optimization becomes
\begin{equation}\label{eq:primal_constrained_MDP}
\resizebox{1\hsize}{!}{$
\begin{aligned}
  J_p^\star &= {\left\langle {{\phi_+ },V} \right\rangle _S},\\
  \mathrm{s.t.}\;V\left( s \right) &= \sum\limits_{a\in A}\sum\limits_{s'\in S} {\pi(a|s) P_a\left( {s,s'} \right)\left( {{R_a}\left( {s,s'} \right) + \gamma V\left( {s'} \right)+\sigma(s)} \right)},\\
   \pi(s)& = \mathop {\arg \max }\limits_{\pi(s)\in\Delta_s} \left\{ {\sum\limits_{s'\in S} {{P_a}\left( s,s' \right)\left( {{R_a}\left( s,s' \right) + \gamma V\left( {s'} \right)} \right)} } \right\}.
  \end{aligned}
  $}
\end{equation}

Similar to the dynamic system case, a primal-dual algorithm solves the constrained MDP problem. The algorithm follows standard primal-dual update procedure based on the Lagrangian in \eqref{eq:MDP_lagrangian}. For simplicity, define the advantage function as
\begin{equation*}
\begin{aligned}
  \pazocal{A}(a|s)&=\sum\limits_{s'}P_a(s,s')(R_a(s,s')+\gamma V(s'))-V(s)\\
  \pazocal{A}(s)&=[\pazocal{A}(a_1|s),\pazocal{A}(a_2|s),...,\pazocal{A}(a_M|s)],
  \end{aligned}
\end{equation*}
and let $R^\pi$ be the reward vector under policy $\pi$, where
\begin{equation*}
  R^\pi(s)=\sum\limits_a \pi(a|s)\sum\limits_{s'}P_a(s,s')R_a(s,s').
\end{equation*}
The advantage function describes the expected benefit (comparing to the current value function) of taking action $a$ at state $s$, and $R^\pi$ describes the expected immediate reward following policy $\pi$.
The primal-dual algorithm is shown in Algorithm \ref{alg:primal_dual_MDP}, where $\alpha>0$ and $\beta>0$ are step sizes for the policy update and $\sigma$ update.  $\proj$ is the projection operator, which is implemented with quadratic programming:
\begin{equation*}\label{eq:proj_MDP}
  \proj_{\Delta_s}\pi(s)=\mathop {\arg \min }\limits_{c\in {\Delta _s}} \left\| {c - \pi(s)} \right\|_2^2.
\end{equation*}
\begin{algorithm}[H]
    \caption{Primal-dual algorithm for constrained MDP}
    \label{alg:primal_dual_MDP}
    \begin{algorithmic}[1] 
            \State  $\overline{\sigma}[0] \gets \mathbf{0}$, $j\gets 0$, $\forall s\in S,\pi(s)[0]\gets \pi_0(s)$
            \State $\overline{\rho}_s=(I-(\gamma (P^{\pi[j]})^\intercal))^{-1}{\overline{\phi}_+}$
            \Do
                \For{$s\in S$}
                    \State \resizebox{.75\hsize}{!}{$\pi(s)[j + 1] \gets \proj_{\Delta _s} \left( \pi(s)[j] + \alpha\rho_s(s) \pazocal{A}(s) \right)$}
                \EndFor
                \State $\overline{\rho}_s=(I-(\gamma (P^{\pi[j+1]})^\intercal))^{-1}{\overline{\phi}_+}$
                \State $\overline{V}=(I-\gamma P^{\pi[j+1]})^{-1} (R^\pi+\overline{\sigma})$
                \State ${\overline{\sigma}[j+1]} \gets \left\{ {\mathbf{0},{\overline{\sigma} [j]} + \beta \left( {\overline{\rho}_s - {\rho ^{\max }}} \right)} \right\}$
                \State $j \gets j+1$.
            \doWhile { \resizebox{.85\hsize}{!}{$\neg\left\{({\overline{\rho}_s } \le {\rho ^{\max }})\; \textbf{and}\;  \forall s\in S,\left\| \pi(s)[j+1]-\pi(s)[j] \right\|_\infty \le \epsilon\right\} $ }}
            \State \textbf{return} $\pi[j]$, $\rho_s$, $V$
    \end{algorithmic}
\end{algorithm}
$\pi_0$ is the initial policy. The algorithm alternates between primal update and dual update. In the primal update, $\pi(s)$ is updated by taking gradient ascent on the Lagrangian. Then the density function and value function under the new policy is calculated with simple algebraic equations from \eqref{eq:primal_constrained_MDP} and \eqref{eq:dual_constrained_MDP}. Then $\sigma$ is updated with the gradient w.r.t. the Lagrangian and the algorithm iterates between these steps until an optimal and feasible (safe) solution is found, which is assessed by the KKT condition.
\subsection{Constrained optimization with multiple MDPs}
The same setup can be extended to the case with multiple MDPs. Consider the case where there are $K$ MDPs: $(S^k,A^k,P_a^k,R_a^k),k=1,...,K$, each with its own supply function $\phi_+^k$ and policy $\pi^k$. Let $\rho_s^k$ be the stationary density function for the $k$-th MDP and define the cumulative density
\begin{equation*}
  \rho_c(s)\doteq\sum\limits_{k=1}^K\rho_s^k(s),
\end{equation*}
with $\overline{\rho}_c$ being its vector form. The constraint is on the cumulative density:
\begin{equation*}
  \overline{\rho}_c\le\rho^{\max}.
\end{equation*}
In this case, the constrained optimization is written as
\begin{equation}\label{eq:dual_multi_constrained_MDP}
\resizebox{.95\hsize}{!}{$
  \begin{aligned}
J_d^\star=\max\limits_{\rho_s^{1:K},\pi^{1:K}} &\sum\limits_{k=1}^K\sum\limits_{s\in S^k} {{\rho_s^k(s)}\sum\limits_{a\in A^k}\pi^k(a|s)\sum\limits_{s'\in S^k}P^k_a(s,s') R^k_a(s,s')} \\
\rm{s.t.}\;&\forall k=1,...K,\;\overline{\rho}_s^k=\gamma (P^{\pi^k})^\intercal \overline{\rho}_s^k+\overline{\phi}^k_+,\\
&\forall s\in S^k,\pi^k(s)\in\Delta^k_s,\sum\limits_{k=1}^K\overline{\rho}_s^k\le\rho^{\max},
\end{aligned}
$}
\end{equation}
where $\Delta_s^k$ is the set of all stochastic policies at state $s$ for the $k$-th MDP. Since the constraint is on the cumulative density, there is only one dual variable $\sigma$ associated with the constraint. The primal-dual algorithm for optimizing multiple MDPs with density constraint is shown as Algorithm \ref{alg:primal_dual_multiple_MDP}.
\begin{algorithm}
    \caption{Primal-dual algorithm for multiple constrained MDPs}
    \label{alg:primal_dual_multiple_MDP}
    \begin{algorithmic}[1] 
            \State  $\overline{\sigma}[0] \gets \mathbf{0}$, $j\gets 0$, $\forall s\in S,\pi(s)[0]\gets \pi_0(s)$
            \For{$k=1,...,K$}
                \State $\overline{\rho}^k_s=(I-(\gamma (P^{\pi^k[j]})^\intercal))^{-1}{\overline{\phi}^k_+}$
            \EndFor
            \Do
                \For{$k=1,...,K$}
                    \For{$s\in S$}
                        \State \resizebox{.75\hsize}{!}{$\pi^k(s)[j + 1] \gets \proj_{\Delta^k _s} \left( \pi^k(s)[j] + \alpha\rho_s^k(s) \pazocal{A}^k(s) \right)$}
                    \EndFor
                    \State $\overline{\rho}_s^k=(I-(\gamma (P^{\pi^k[j+1]})^\intercal))^{-1}{\overline{\phi}^k_+}$
                    \State $V^k=(I-\gamma P^{\pi^k[j+1]})^{-1} (R^{\pi^k}+\overline{\sigma})$
                \EndFor
                \State ${\overline{\rho} _c} \gets \sum\nolimits_{k = 1}^K {{\overline{\rho}_s^k}} $
                \State ${\overline{\sigma}[j+1]} \gets \left\{ {\mathbf{0},{\overline{\sigma} [j]} + \beta \left( {{\overline{\rho}_c} - {\rho ^{\max }}} \right)} \right\}$
                \State $j \gets j+1$.
            \doWhile { \resizebox{.75\hsize}{!}{$\neg\left\{\begin{aligned}&({\overline{\rho}_c } \le {\rho ^{\max }})\;\textbf{and}\;
              \forall k=1,...,K,\forall s\in S^k,\\&\left\| \pi^k(s)[j+1]-\pi^k(s)[j] \right\|_\infty \le \epsilon \end{aligned}\right\}$ }}
            \State \textbf{return} $\pi^{1:K}$, $\rho_s^{1:K}$, $V^{1:K}$
    \end{algorithmic}
\end{algorithm}

\section{Results and examples}\label{sec:result}
In this section we present three examples. The first example is a simple robot navigation problem with a danger area and exogenous disturbance. The purpose of this example is to demonstrate and visualize the proposed method. The second example is a practical segway obstacle avoidance problem with experiment implementation. The third example is a macroscopic traffic control problem modelled as multiple MDPs, and we show the optimal solution to the problem with density constraint.
\subsection{Robot navigation with obstacle}
We use a robot navigation problem as example, where the robot follows a simple 2D kinetic model:
\begin{equation*}\label{eq:robot_dyn}
  \begin{aligned}
  \dot{x}&=u_1\\
  \dot{y}&=u_2.
  \end{aligned}
\end{equation*}
The destination $\pazocal{X}_g=\left\{x|\left\|x\right\|\le0.1\right\}$ is a small ball around the origin, and the running cost is only a function of the state: $C(x)=\mathbf{1}-\mathds{1}_{\pazocal{X}_g}$, i.e., $C(x)=1$ when $x$ is outside $\pazocal{X}_g$ and 0 otherwise. The input bound $\pazocal{U}=\left\{u\mid \left\|u\right\|\le0.5\right\}$, the positive supply $\phi_+$ is plotted in Fig. \ref{fig:phi_plus} and the red circled area is $\pazocal{X}_d$.
\begin{figure}
  \centering
  \includegraphics[width=0.7\columnwidth]{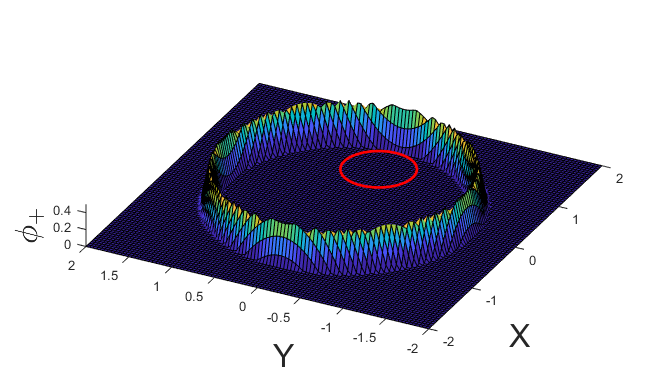}
  \caption{positive supply function $\phi_+$}\label{fig:phi_plus}
\end{figure}

The primal-dual algorithm terminates after 4 iterations, and a comparison of the result with and without the safety constraint is shown in Fig. \ref{fig:robot_res}.
\begin{figure}
  \centering
  \includegraphics[width=1.0\columnwidth]{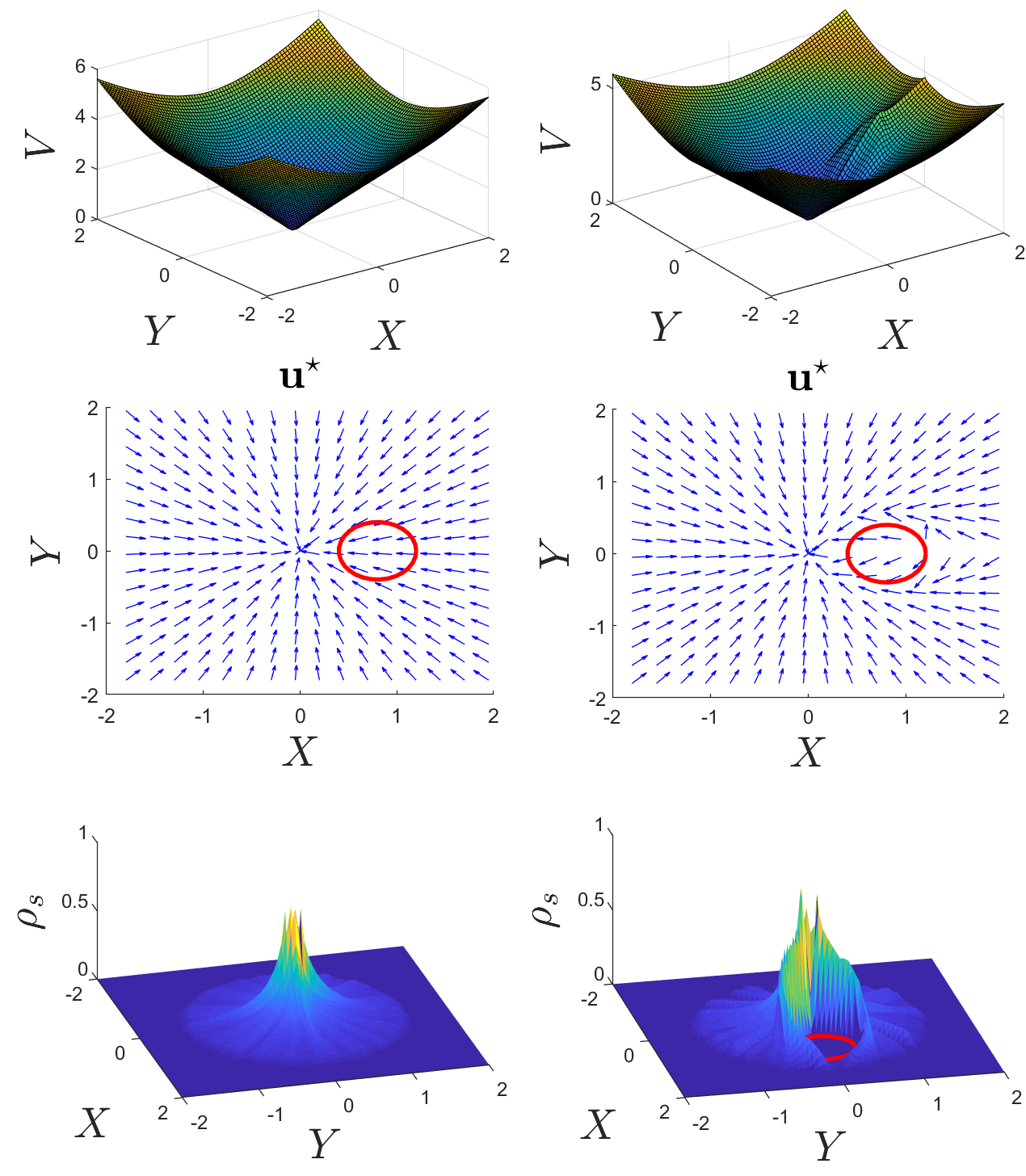}
  \caption{Constrained optimal control synthesis for 2D integrators with density function}\label{fig:robot_res}
\end{figure}

The left side shows the value function, the optimal control strategy, and the stationary density $\rho_s$ for the unconstrained case, and the right side shows the constrained case. From the density plot, we see zero density within the danger area $\pazocal{X}_d$. The value function of the constrained case has a small ``bump'' around $\pazocal{X}_d$, which is sufficient to steer all state around $\pazocal{X}_d$ and satisfy the safety constraint.

If we add disturbance to the dynamics as
\begin{equation*}
    \begin{aligned}
  \dot{x}&=u_1+d_1\\
  \dot{y}&=u_2+d_2,
  \end{aligned}
\end{equation*}
with $d\in\pazocal{D}=\left\{d\mid \left\|d\right\|\le0.25\right\}$, where $\pazocal{D}$ is half the size of $\pazocal{U}$, the robust primal-dual algorithm shown as Algorithm \ref{alg:primal_dual_opt_con_robust} is used to solve the constrained optimal control for this dynamic system with disturbance.
\begin{figure}
  \centering
  \includegraphics[width=1.0\columnwidth]{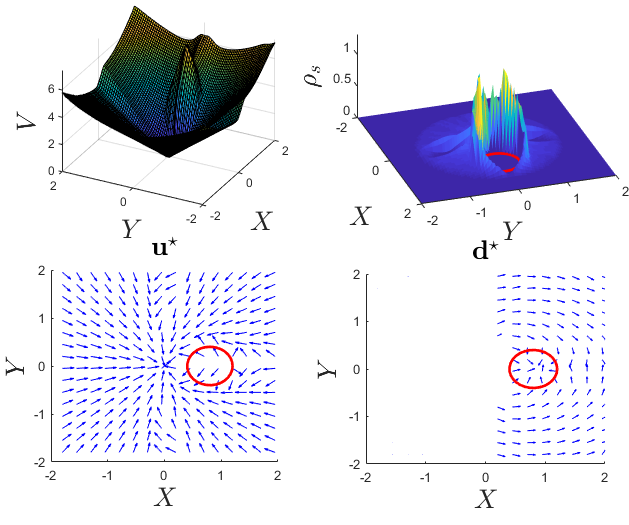}
  \caption{Synthesis result for 2D integrators with disturbance}\label{fig:robust_robot_res}
\end{figure}
Fig. \ref{fig:robust_robot_res}(a) shows the value function $V$, Fig. \ref{fig:robust_robot_res}(b) shows the density function $\rho_s$ under the worst-case disturbance $\mathbf{d}^\star$, where $\rho_s$ is zero everywhere inside the danger area $\pazocal{X}_d$. Fig. \ref{fig:robust_robot_res}(c) and Fig. \ref{fig:robust_robot_res}(d) shows the control strategy $\mathbf{u}^\star$ and the worst-case disturbance $\mathbf{d}^\star$. Comparing to the case without disturbance, the controller now is countering $\mathbf{d}^\star$ to make sure that no state enters the danger zone.

\subsection{Obstacle avoidance for a Segway}
In this section we present a practical example of segway obstacle avoidance. This example was presented in a conference paper \cite{densityexper2019}, and more detail can be found in the conference paper.
\begin{figure}[H]
\vspace{-0.2cm}
  \centering
  \includegraphics[width=0.85\columnwidth]{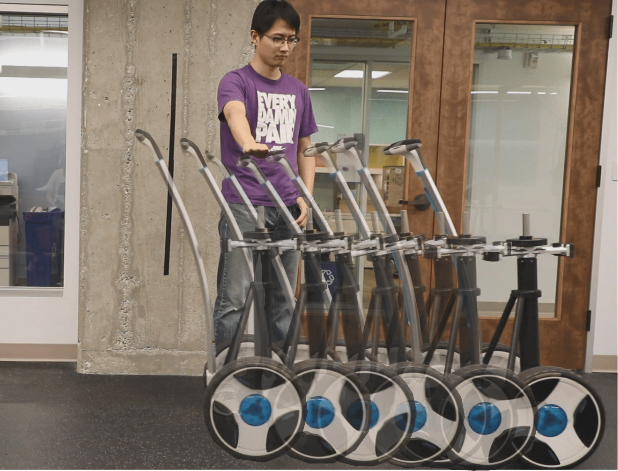}
  \caption{Segway experiment}\label{fig:seg_expr}
  \vspace{-0.5cm}
\end{figure}
The control objective is to move the segway from the initial position to the destination without hitting an obstacle, which has a round shape. The state of the segway is $x=[s,v,\theta,\dot{\theta}]^\intercal$, where $s$ and $v$ are the longitudinal position and velocity, $\theta$ and $\dot{\theta}$ are the pitch angle and angular velocity. The dynamics of the segway can be obtained via the Lagrangian method:
 \begin{equation}\label{eq:segway_dyn}
   \begin{bmatrix}
                   \dot{v} \\
                   \ddot{\theta}
                 \end{bmatrix}=\begin{bmatrix}
     M+m & ml\cos(\theta) \\
     ml\cos(\theta) & ml^2+J_e
   \end{bmatrix}^{-1}\begin{bmatrix}
                     u/R \\
                     mlg\theta -u
                   \end{bmatrix},
 \end{equation}
where $m$ is the cart mass, $l$ is the height of the CG of the cart, $R$ is the radius of the wheel, $g$ is the gravitational acceleration, $M$ is the equivalent mass of the wheels (accounting for the moment of inertia) and $J_e$ is the moment of inertia of the cart. $u$ is the control input, which is the torque applied on the wheel.

 Initially, the segway is equipped with a legacy controller $\mathbf{u}_0$ designed with Linear Quadratic Regulator (LQR) method:
 \begin{equation}\label{eq:LQR}
   \mathbf{u}_0(x) = K_1 \mathbf{Sat}_\eta(s-s_{des})+K_2v+K_3\theta +K_4\dot{\theta},
 \end{equation}
 where $K$ is obtained by solving the Riccati equation for the linearized dynamics, $s_{des}$ is the position of the destination, set to be $s_{des}=1.8m$, and $\mathbf{Sat}_\eta$ is the symmetric saturation function with range $[-\eta,\eta]$.

 \begin{figure}[H]
 \vspace{-0.3cm}
  \centering
  \includegraphics[width=0.8\columnwidth]{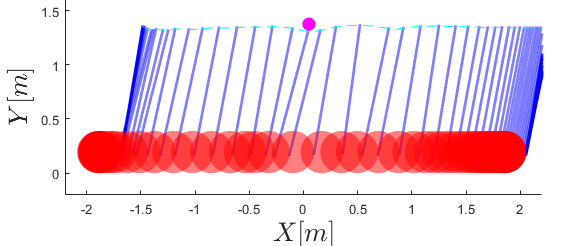}
  \caption{Experiment run of the segway}\label{fig:experiment_anim}
  \vspace{-0.5cm}
\end{figure}
 \begin{figure}[H]
  \centering
  \includegraphics[width=0.8\columnwidth]{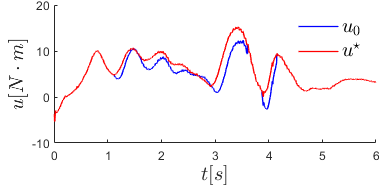}
  \caption{Input signals during the experiment}\label{fig:experiment_input}
  \vspace{-0.5cm}
\end{figure}
 The segway has a pole installed on one side, which would collide with the obstacle if following $\mathbf{u}_0$. We would like to find the optimal safe controller that is the closest to $\mathbf{u}_0$, which is achieved by posing the constrained optimal control problem with the following cost function:
 \begin{equation}\label{eq:cost}
 \resizebox{1\hsize}{!}{$
   V(x_0)=\int_0^\infty{e^{-\kappa t}\left\|u(t)-u_0(t)\right\|^2 dt},\mathrm{s.t.}x(0)=x_0,\dot{x}=F(x,u),
   $}
 \end{equation}
 where $\kappa$ is the discount factor and $F$ is the segway dynamics.

The primal-dual algorithm terminates after 223 iterations, outputting a safe controller. The safe controller is implemented on a customized segway platform in AMBER Lab at Caltech, as shown in Fig. \ref{fig:seg_expr}, and an experiment run is shown in Fig. \ref{fig:experiment_anim}. The segway accelerates before reaching the obstacle so that the tip got tilted down to avoid the obstacle. The dotted line shows the trajectory of the pole tip, which indeed avoids the obstacle. Fig. \ref{fig:experiment_input} shows the input signals during the experiment. The red plot denotes the control input of the safe controller $\mathbf{u}^\star$, and the blue plot denotes the original LQR controller $\mathbf{u}_0$. The video of the experiment can be found at \href{https://youtu.be/pdEtknFGu-A}{https://youtu.be/pdEtknFGu-A}.

\subsection{Traffic control with MDP}\label{sec:traffic_MDP}
In this section, we illustrate the constrained MDP optimization with a traffic control example from \cite{kouvelas2017enhancing}, where the task is to control the macroscopic traffic flow for the area shown in Fig. \ref{fig:map_traffic}. The area is divided into $N=7$ regions, and for each region, the traffic capacity is governed by the Macroscopic Fundamental Diagram (MFD) of traffic flow \cite{geroliminis2011properties}. The idea is that when the vehicle density is low, the traffic flow rate increases with vehicle density; when the vehicle density is larger than a threshold, congestion starts to form and the flow rate decreases with vehicle density. One example is shown in Fig. \ref{fig:MFD}, which uses the data from \cite{ji2018determining}.
\begin{figure}
  \centering
  \includegraphics[width=0.7\columnwidth]{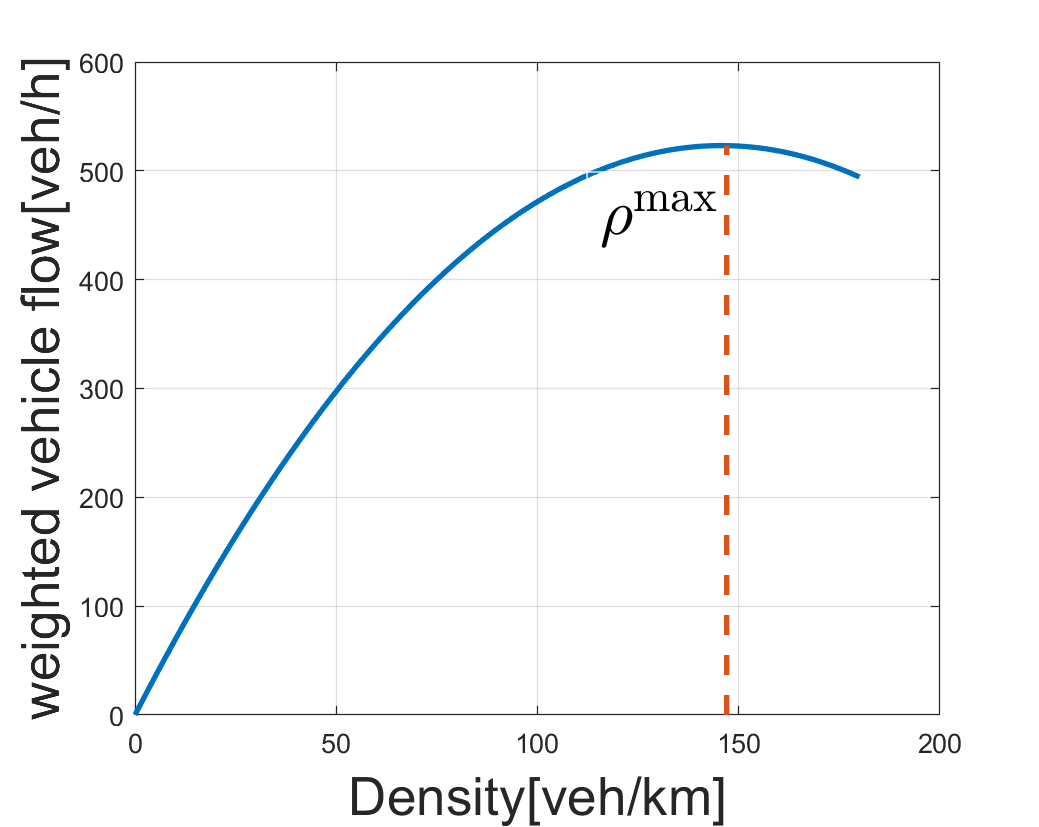}
  \caption{Macroscopic Fundamental Diagram}\label{fig:MFD}
\end{figure}
 We call the turning point the critical density, denoted as $\rho^{\max}$.
\begin{figure}[H]
  \centering
  \includegraphics[width=0.5\columnwidth]{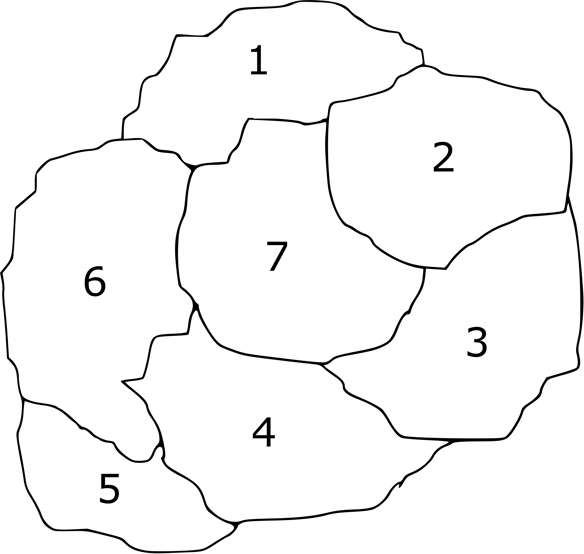}
  \caption{The map of the traffic control area}\label{fig:map_traffic}
\end{figure}
It is assumed that each region in Fig. \ref{fig:map_traffic} has a critical density $\rho _i^{\max }$, and the task of traffic control is to minimize the cost while keeping the density of each region below the critical density.

Since the task is to reach the destination with a path as short as possible, the goal is not to maximize a reward, but rather to minimize a cost. Since the cost can be understood as a negative reward, all the algorithms for MDP presented previously still apply, except for some changing of signs. We assume that for each vehicle, the transition cost only depends on the state. For example, a vehicle wants to get to region 1 from region 5, if it takes the path 5-4-7-1, then the total cost is $C_5+C_4+C_7+C_1$. The state cost is set as $C=[1.2,1.2,1.4,1.1,1,1.6,0.8]^T$.
$\gamma$ is chosen to be 1 since the vehicles won't exit the road until they reach their destinations, and the destinations are modeled as the sink set $S_-$.

 In fact, this problem cannot be solved as a single MDP since the vehicles have different destinations. It is solved as $K=7$ MDPs with 7 different destinations. Let $\phi_{+}^k(s_i)$ denote the traffic demand from $s_i$ to $s_k$, and we assume that all $\phi_{+}^k$ are given. The action is simply to choose which neighboring region to visit next. Therefore, the stochastic policy is simply in the form of transition probability matrices for the 7 MDPs. The set of stochastic policies at state $s_i$ is then simply
 \begin{equation}\label{eq:probability_simplex}
  {\Delta _{s_i}} = \left\{ {\alpha \in \mathbb{R}_{\ge0} ^{1 \times N}\mid \sum\nolimits_j {{\alpha_j}} =1,\left( {j \notin {\pazocal{N}_i}} \right) \to \left( {{\alpha_j} = 0} \right)\;} \right\},
\end{equation}
where $\pazocal{N}_i$ is the neighbor set of $s_i$.

 The density optimization problem is formulated as follows:
\begin{equation}\label{eq:traffic_MDP_density}
  \begin{aligned}
\mathop {\min }\limits_{\pi^{1:N} ,\rho_s^{1:N} } \sum\nolimits_{i = 1}^N &{{C_i}\sum\nolimits_{k = 1}^K {\rho_s(s_i)^k} }\\
\rm{s.t.}&\forall k\in\left\{1,...,K\right\},{\overline{\rho}_s ^k} = (\tilde{P}^{{\pi^k}})^\intercal{\overline{\rho}_s ^k} + {\overline{\phi}_+ ^k}\\
&\forall i\in\left\{1,...,N\right\},\sum\nolimits_{k = 1}^N {\rho_s ^k(s_i)}  \le \rho _i^{\max },
\end{aligned}
\end{equation}
where $\pi^k$ is the strategy for the traffic demand with destination $s_k$, which determines the transition probability matrix $P^{\pi^k}$. The cropped transition probability matrix $\tilde{P}^{\pi^k}$ is modified from $P^{\pi^k}$ by changing the $k$-th column to all zeros, as discussed in Section \ref{sec:duality_MDP}. $\overline{\rho}_s^k\in\mathbb{R}^N$ is the stationary traffic density vector with destination $s_k$ under $\pi^k$.

 As an example, we pose density constraint only on region 7, since it's in the center of the map and likely the most popular route to take. Without the constraint, the 7 MDPs are decoupled, and the optimal policy for each of them can be solved independently by value iteration or policy iteration. Moreover, according to \cite{puterman2014markov}, almost surely, the optimal policies would be deterministic. With the density constraint, the 7 MDPs are coupled, and are solved with Algorithm \ref{alg:primal_dual_multiple_MDP}. The comparison of the density distribution with and without the density constraint is shown in Fig. \ref{fig:MDP_plot} in color difference. The left plot is the cumulative density in the unconstrained case where a very high density appears in region 7; the right plot is the cumulative density in the constrained case where the density in region 7 is diverted to other regions. The overall cost for the unconstrained case is 71.05, and it increases to 79.21 for the constrained case.
\begin{figure}[H]
  \centering
  \includegraphics[width=0.9\columnwidth]{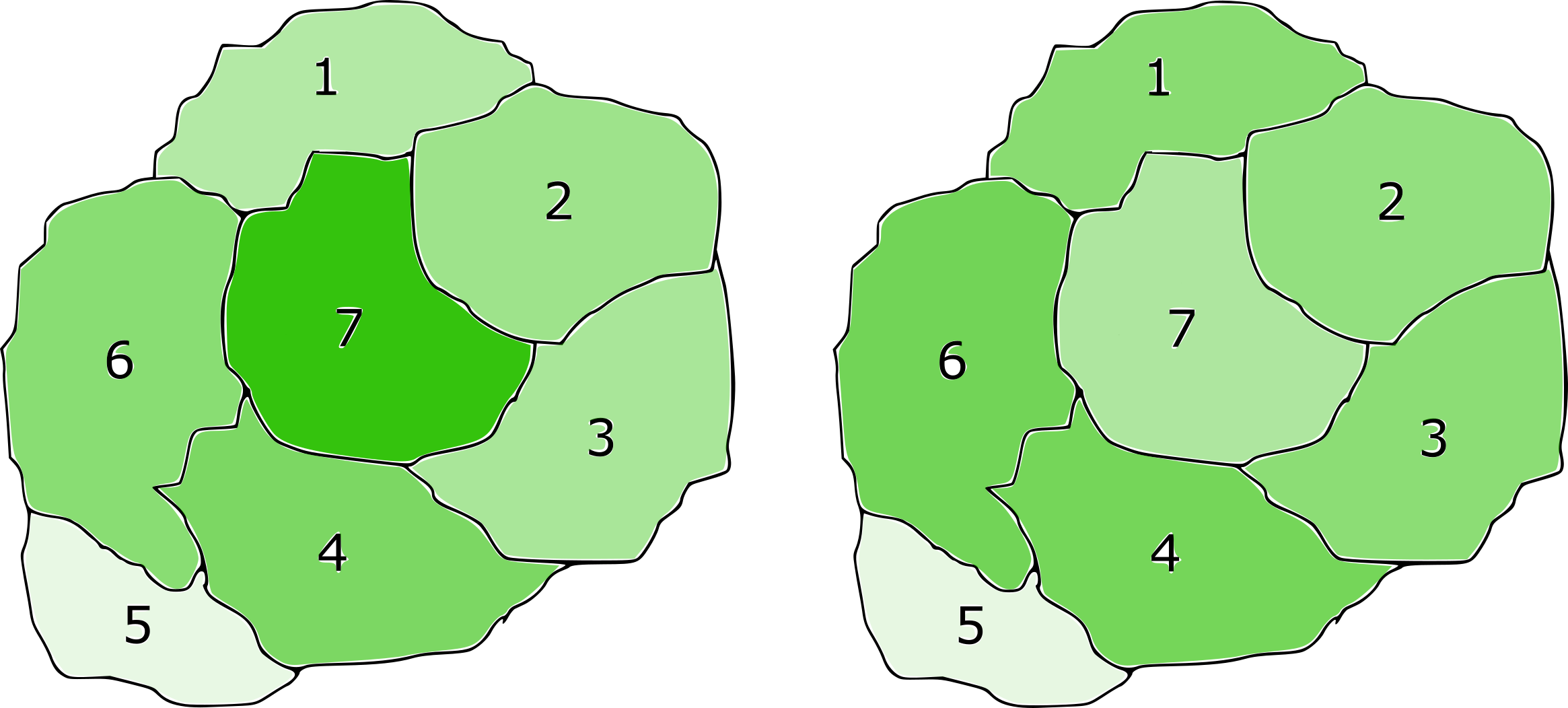}
  \caption{Comparison of the cumulative density in constrained and unconstrained MDP}\label{fig:MDP_plot}
\end{figure}

\begin{figure}[H]
  \centering
  \includegraphics[width=0.9\columnwidth]{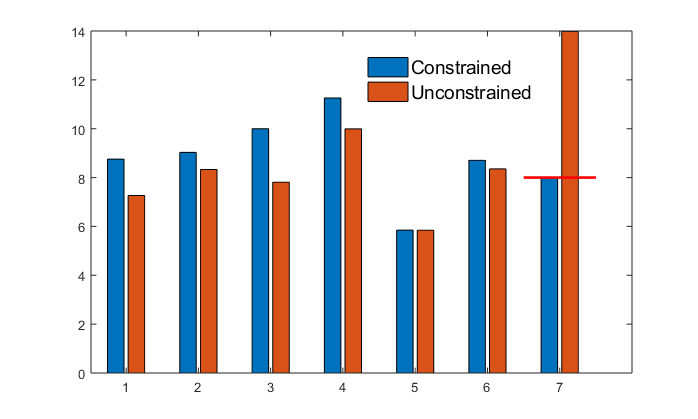}
  \caption{Bar plot of the density in constrained and unconstrained cases}\label{fig:traffic_histo}
\end{figure}
Fig. \ref{fig:traffic_histo} shows the bar plot in the two cases, where the red line on the 7-th column shows the upper bound of the cumulative density of region 7.

Different from the unconstrained case, there may not be a deterministic policy that is optimal. A simple example is an MDP with only two states, one has larger reward, but with a constraint on the density of it. Then the optimal policy is obviously a stochastic policy that barely satisfies the density constraint. Similarly, in the traffic example, the optimal policy for the constrained case turns out to be stochastic, rendering the density at $s_7$ exactly at $\rho_7^{\max}$.

\section{Conclusion and future works}\label{sec:conclusion}

In this paper, we present the density function as the dual of the value function in both optimal control and Markov decision process. Some constraint such as safety constraint and density constraint can then be formulated as an optimization on the density function. Then multiple primal-dual algorithms are proposed to solve the constrained optimal control problems with different setups. The density function approach is extended to the systems with disturbance, and to multiple MDPs coupled by density constraints. We demonstrate the capability of the formulation with two examples, one on robot navigation and one on macroscopic traffic control.

One interesting direction that we plan to try out is to combine the density function with reinforcement learning. There is a natural connection between reinforcement learning and optimal control and the popular methods such as Q learning \cite{watkins1992q} is essentially an approximation of the HJB PDE. However, the learning community typically assumes that a simulator is available rather than a explicit dynamic model, which is needed to compute the density function with the ODE approach. Fortunately, the kernel density estimation approach discussed in \cite{densityexper2019} is able to approximate the density function with just a simulator. We plan to combine reinforcement learning with density function approximation via kernel density estimation in the primal-dual algorithm proposed in this paper and enforce safety on learning. Other future works include parameterized optimal control to accelerate the computation, direct use of the density function in a finite horizon approximation of the optimal control solution (e.g. model predictive control), and analysis of the convergence of the primal-dual algorithm.

 \balance
\bibliographystyle{abbrv}
\bibliography{density_bib}
\end{document}